\documentclass[a4paper,UKenglish,cleveref, autoref, thm-restate]{lipics-v2021}

\pdfoutput=1 
\hideLIPIcs  

\title{Information Propagation and Contraction in Functional Interpretations}

\author{Chuangjie {Xu}}{}{}{}{}

\authorrunning{C. Xu}

\Copyright{Chuangjie Xu}

\ccsdesc[500]{Theory of computation~Proof theory}
\ccsdesc[300]{Theory of computation~Constructive mathematics}
\ccsdesc[300]{Theory of computation~Type theory}

\keywords{functional interpretation, affine logic, contraction, finite-type arithmetic, intuitionistic arithmetic, Dialectica interpretation, Herbrand interpretation}

\acknowledgements{We are grateful to Tom de Jong, Fredrik Nordvall Forsberg, Paulo Oliva, and Helmut Schwichtenberg for valuable discussions on this work and for their careful proofreading of the manuscript.}

\nolinenumbers 

\EventEditors{John Q. Open and Joan R. Access}
\EventNoEds{2}
\EventLongTitle{42nd Conference on Very Important Topics (CVIT 2016)}
\EventShortTitle{CVIT 2016}
\EventAcronym{CVIT}
\EventYear{2016}
\EventDate{December 24--27, 2016}
\EventLocation{Little Whinging, United Kingdom}
\EventLogo{}
\SeriesVolume{42}
\ArticleNo{23}

\usepackage{booktabs}
\usepackage{mathpartir}
\usepackage{bbm}

\newcommand{\T}{\mathsf{T}}
\newcommand{\HAo}{\mathsf{HA}^\omega}
\newcommand{\NHAo}{\mathsf{N}\text{-}\HAo}
\newcommand{\EHAo}{\mathsf{E}\text{-}\HAo}
\newcommand{\WEHAo}{\mathsf{WE}\text{-}\HAo}
\newcommand{\aff}{\mathsf{aff}}
\newcommand{\NHAoa}{\NHAo_\aff}

\newcommand{\OHAo}{\HAo_{\mathrm{obs}}}

\newcommand{\pair}[2]{\langle #1, #2 \rangle}
\newcommand{\pr}{\mathsf{pr}}

\newcommand{\N}{\mathbb{N}}
\newcommand{\suc}{\mathsf{S}}
\newcommand{\rec}{\mathsf{rec}}
\newcommand{\Two}{\mathbbm{2}}
\newcommand{\sft}{\mathsf{t}}
\newcommand{\sff}{\mathsf{f}}
\newcommand{\sfif}{\mathsf{if}}
\newcommand{\One}{\mathbbm{1}}
\newcommand{\sys}{\mathcal{S}}
\newcommand{\src}{\mathcal{A}}

\newcommand{\eqdef}{\mathrel{:\equiv}}
\newcommand{\eqsubst}{\mathrel{:=}}

\newcommand{\nucleus}{\sharp}
\newcommand{\compat}{\triangleleft}
\newcommand{\nule}{\mathbin{\boldsymbol{\preccurlyeq}}}
\newcommand{\ctr}{\mathsf{ctr}}

\newcommand{\FV}{\mathsf{FV}}

\newcommand{\Inh}{\mathsf{Inh}}

\newcommand{\ty}{\boldsymbol{\tau}}
\newcommand{\lke}{\mathsf{ke}^\nucleus}
\newcommand{\lif}{\sfif^\nucleus}
\newcommand{\lrec}{\rec^\nucleus}

\newcommand{\baire}{\N^\N}
\newcommand{\val}{\mathsf{val}}
\newcommand{\modu}{\mathsf{mod}}

\newcommand{\cc}{\mathbf{c}}

\begin{document}

\maketitle

\begin{abstract}
  This paper separates two components of functional interpretations: affine information propagation and contraction.
  We introduce information nuclei as an algebraic interface to capture the affine component.
  An information nucleus specifies what information is associated with finite-type objects, how exact objects are compatible with such information, and how information is propagated through functions.
  From any information nucleus we obtain a formula translation and a soundness theorem for affine finite-type arithmetic.
  Extending soundness to finite-type arithmetic with contraction requires one additional ingredient: a formula-indexed contraction structure reducing the challenges generated by duplicated assumptions to a single challenge.
  Finite collections of candidates with union yield a Herbrand-style interpretation, while exact information with challenge selection yields the usual Dialectica interpretation over an arithmetic system restricted to decidable primitive formulas.
  The resulting framework provides a uniform method for specifying the information carried by extracted realizers, allowing existing functional interpretations to be systematically enriched with auxiliary data, such as continuity information.
\end{abstract}

\section{Introduction}
\label{sec:introduction}

Functional interpretations are a central tool for extracting computational information from proofs, going back to G{\"o}del's Dialectica interpretation~\cite{Goedel1958Dialectica}; see also Kohlenbach~\cite{Kohlenbach2008Book} for a modern systematic treatment.
One important application is proof mining, where functional interpretations are used to obtain explicit quantitative data from prima facie nonconstructive proofs.
Many variants of functional interpretation are known, such as the Diller--Nahm~\cite{DillerNahm1974}, monotone~\cite{Kohlenbach1996}, bounded~\cite{FerreiraOliva2005BFI}, and Herbrandized variants~\cite{BBS2012}.
Their presentations often combine several proof-theoretic mechanisms at once: the representation and propagation of witness and challenge information, the treatment of duplicated assumptions, and the handling of extensionality.

The starting point of this paper is that the first two mechanisms can be separated.
In an affine setting, where assumptions are not duplicated, soundness of a functional interpretation only requires operations that propagate witness and challenge information along the logical and arithmetic rules.
Once contraction is added, one must also explain how the challenge information produced by two uses of the same assumption is replaced by a single one.
The third issue, extensionality, is deliberately left aside in this paper, so that the separation between affine propagation and contraction can be treated without further complications.

This perspective is close in spirit to unifying, parametrized, and uniform accounts of
functional interpretations~\cite{Oliva2006Unifying,Oliva2014PastFuture,DinisOliva2021Parametrised,Oliva2025Uniform}, but the emphasis is different.
Rather than seeking a single framework that recovers many interpretations, we isolate a small algebraic interface for the affine core of functional interpretations and then ask what extra structure is needed for handling contraction.
This modular viewpoint separates the specification of the information carried by extracted realizers from the additional structure needed to combine information arising from duplicated assumptions.

This paper introduces \emph{information nuclei} as such an algebraic interface.  An information nucleus assigns to each finite type a type of information objects, a compatibility relation between exact objects and information objects, and basic operations for producing and propagating information.  From these data, we can define the realizer and challenge types, translate formulas, and prove soundness for affine finite-type arithmetic.  The passage to soundness with contraction is then treated separately: once the nucleus is equipped with formula-indexed \emph{contraction operations} that reduce the challenges generated by duplicated assumptions to a single challenge, the affine soundness proof extends to the full system.

The main technical results are therefore two soundness theorems.
First, any information nucleus gives a functional interpretation of an affine finite-type arithmetic, where assumptions may be weakened but not contracted.
Second, if the nucleus is equipped with a suitable contraction structure, the interpretation extends to finite-type arithmetic with contraction.
For example, representing information by finite collections of possible values and interpreting contraction by union gives a Herbrand-style interpretation, while representing information by exact values and interpreting contraction by challenge selection gives the usual Dialectica interpretation when the primitive formulas of the source system are decidable.
Moreover, the choice of information nucleus determines the information carried by extracted realizers and can systematically enrich it with auxiliary structure; continuity nuclei from the parametrized translation of System~\(\T\)~\cite[Section~3.3]{Xu2020Gentzen} give examples in which extracted realizers carry continuity data.

The paper is organized as follows.  Section~\ref{sec:background} fixes the finite-type arithmetic systems used in the paper, with the complete description given in Appendix~\ref{app:systems}.  Section~\ref{sec:interface} introduces information nuclei and the induced formula translation.  Section~\ref{sec:affine-soundness} proves affine soundness.  Section~\ref{sec:contraction} adds contraction structures and proves full soundness.  Section~\ref{sec:related} discusses related work.  The detailed soundness proofs are given in Appendix~\ref{app:proofs}, and Appendix~\ref{app:continuity-fi} spells out the continuity example.

\section{Background: arithmetic systems in finite types}
\label{sec:background}

This section fixes the finite-type arithmetic systems used in the paper: the affine neutral system \(\NHAoa\), its extension \(\NHAo\) by contraction, and an arithmetic system \(\OHAo\) with observational equality.
Full sequent rules and arithmetic axioms are listed in Appendix~\ref{app:systems}.

\subsection{The finite-type term language}
\label{sec:term-language}

We work with the usual finite types over the base type \(\N\), but include the Boolean type \(\Two\) and the unit type \(\One\) as convenient primitive types.  Thus types are generated by
\[
  \sigma,\tau ::= \N \mid \Two \mid \One \mid \sigma\times\tau \mid \sigma\to\tau .
\]
The additional base types \(\Two\) and \(\One\) are inessential: they could be encoded in \(\N\), but keeping them primitive simplifies the presentation.

We use Gödel's system \(\T\) as the underlying term language: terms are built from variables, lambda abstraction and application, the constants
\[
  0:\N,\qquad \suc:\N\to\N,\qquad \sft,\sff:\Two,\qquad \star:\One,
\]
and the product operations
\[
  \pair{\cdot}{\cdot}_{\sigma,\tau} : \sigma \to \tau \to \sigma\times\tau,
  \qquad
  \pr_1^{\sigma,\tau} : \sigma\times\tau \to \sigma,
  \qquad \text{and} \qquad
  \pr_2^{\sigma,\tau} : \sigma\times\tau \to \tau .
\]
In addition, for every type \(\sigma\), we have a conditional and a primitive recursor:
\[
  \mathsf{if}_\sigma : \Two \to \sigma \to \sigma \to \sigma
  \qquad \text{and} \qquad
  \mathsf{rec}_\sigma : \sigma \to (\N \to \sigma \to \sigma) \to \N \to \sigma .
\]
The defining equations for these term formers are standard and are listed in Appendix~\ref{app:systems}.

\medskip
\noindent\textbf{Notation.}
Throughout the paper, we use \(\eqdef\) for definitions and \(\equiv\) for meta-level notational equality between expressions.
We omit type subscripts when they are clear from context.
Finite products are encoded by a fixed association of binary products; finite tuples are written with angle brackets and their components by \(w_i\).
Multi-argument functions are curried; for readability, \(f(a,b,c)\) abbreviates \(f\,a\,b\,c\), not application to a triple.
Function composition is written \(g\circ f\), that is, \(g\circ f \eqdef \lambda x. g(fx)\).

\subsection{Affine neutral arithmetic \texorpdfstring{\(\NHAoa\)}{N-HA omega affine}}

We now introduce the affine neutral arithmetic system \(\NHAoa\). Formulas are generated by
\[
  A,B ::= \bot \mid s =_\sigma t \mid A\wedge B \mid A\vee B \mid A\to B \mid \forall x^\sigma A \mid \exists x^\sigma A
\]
where \(s\) and \(t\) are terms of type \(\sigma\). Equality is primitive at every finite type, and is neutral in the sense that equality at function type is not identified with pointwise equality. Throughout the paper, \(\top\) abbreviates \(\star =_\One \star\).

The system is formulated as an \emph{intuitionistic single-conclusion sequent calculus}.
It contains the usual single-conclusion left and right logical rules in affine, context-splitting form, together with weakening, but not contraction.
Thus contexts are treated \emph{affinely}: assumptions may be discarded, but a single occurrence of an assumption is not freely reusable by a structural rule.  Since contexts are finite multisets, exchange is built into the representation of contexts rather than included as a rule.

The \emph{equality axioms} say that each \(=_{\sigma}\) is an equivalence relation. They also include congruence for term contexts: for every term \(r:\tau\) and variable \(x:\sigma\),
\[
  s =_\sigma t \to r[x \eqsubst s] =_\tau r[x \eqsubst t].
\]
Consequently, equality is substitutive for formulas: from \(s=_\sigma t\) and
\(A(s)\), one can derive~\(A(t)\).

\emph{Induction} is included as a closed rule:
\[
  \inferrule*[right=\(\mathrm{ind}\).]
  {\vdash A(0) \\
   \vdash \forall n^\N(A(n)\to A(\suc n))}
  {\vdash \forall n^\N A(n)}
\]
This formulation is affine: the premises are closed, so no assumption from a surrounding context is reused along the induction.  In contrast, the usual induction axiom, or equivalently an induction rule with arbitrary context, packages the induction data as reusable assumptions and therefore relies on contraction-like behaviour.

The full list of sequent rules and arithmetic axioms is given in Appendix~\ref{app:systems}.

\subsection{Contraction and full neutral arithmetic \texorpdfstring{\(\NHAo\)}{N-HA omega}}

The full neutral system \(\NHAo\) is obtained from \(\NHAoa\) by adding the contraction rule
\[
  \inferrule*[right=\(\mathrm{contr}\)]
  {\Gamma,A,A \vdash B}
  {\Gamma,A \vdash B}.
\]
All other term-forming operations, formulas, equality axioms, and the closed induction rule are the same as in \(\NHAoa\).
In particular, once contraction is available, the usual induction axiom, or equivalently an induction rule with arbitrary context, is derivable from the closed induction rule.


\subsection{Observational equality and extensionality}

We also use an arithmetic system with \emph{observational equality}, denoted \(\OHAo\).
It has the same finite-type term language as \(\NHAo\), full intuitionistic rules with contraction, and the same induction principle, but differs in its treatment of equality.
Its primitive atomic formulas are equalities only at the base types \(\N\), \(\Two\), and \(\One\), and these base-type equalities are decidable by Boolean terms, i.e., for each base type \(\iota\), there is a closed term \(\mathsf{eq}_\iota : \iota \to \iota \to \Two \) such that \(\OHAo\) proves \( \mathsf{eq}_\iota(x,y)=t \leftrightarrow x=_\iota y \).
Consequently, every quantifier-free formula built from the primitive equalities is likewise decided by a Boolean term.
Equality at product and function types is not primitive; it is defined recursively in terms of lower-type equality:
\[
  x =_{\sigma\times\tau} y \eqdef
  \pr_1(x) =_\sigma \pr_1(y) \wedge \pr_2(x) =_\tau \pr_2(y)
  \quad \text{and} \quad
  f =_{\sigma\to\tau} g \eqdef
  \forall x^\sigma\, (fx =_\tau gx).
\]
The product clause is included for uniformity; by the \(\eta\)-rule for products, it is provably equivalent to the usual componentwise equality on pairs.
Only primitive equalities support substitution in arbitrary terms. Equality at product and function types is merely defined, so observational equality does not by itself enjoy congruence.

The standard weakly extensional system \(\WEHAo\) is obtained from \(\OHAo\) by by adding the quantifier-free rule of extensionality, while the fully extensional system \(\EHAo\) is obtained by adding full extensionality (see, e.g.,~\cite{TroelstraVanDalen1988,Kohlenbach2008Book}).
We mention these systems only to locate \(\OHAo\) among the standard extensional variants of finite-type arithmetic systems.
Weak and full \emph{extensionality are not treated} by the general soundness theorems of this paper.

\subsection{Comparison of the systems}

The systems used in the paper differ along two independent axes: whether contraction is available, and how equality at higher types is treated.
They are summarized in Table~\ref{tab:systems}.

\begin{table}[htbp]
  \centering
  \caption{Finite-type arithmetic systems used in the paper.}
  \label{tab:systems}
  \begin{tabular}{@{}llll@{}}
    \toprule
    System     & Contraction & Higher-type equality    & Role in the paper            \\
    \midrule
    \(\NHAoa\) & no          & primitive, neutral      & affine soundness             \\
    \(\NHAo\)  & yes         & primitive, neutral      & soundness with contraction   \\
    \(\OHAo\)  & yes         & defined observationally & standard Dialectica instance \\
    \bottomrule
  \end{tabular}
\end{table}

In \(\NHAoa\) and \(\NHAo\), equality is primitive at every type and therefore enjoys the built-in congruence (substitution) rules of the logic. By contrast, \(\OHAo\) has primitive equality only at the base types; equality at higher types is defined observationally and does not by itself support substitution in arbitrary terms.

We include \(\OHAo\) in the paper mainly to recover the standard Dialectica interpretation in our framework.
In that case, counterexample selection is used to interpret contraction; the argument ultimately relies on the decidability of the primitive base-type formulas of \(\OHAo\).

\section{An algebraic interface for information propagation}
\label{sec:interface}

This section introduces the algebraic interface used by the interpretation,
packaged as a structure called an \emph{information nucleus}.  An information
nucleus specifies how information about finite-type objects is represented, how
exact objects give rise to information, and how information is propagated
through terms.  From this structure we define the realizer and challenge types,
the translated formulas, and the basic realizer operations used later in the
soundness proof.  The structural problem of contraction is deliberately left
out and will be treated separately in Section~\ref{sec:contraction}.

Throughout this section, \(\sys\) denotes a fixed target system, always assumed to be \(\NHAoa\) or an extension of \(\NHAoa\).
An information nucleus is always taken over this target system: its compatibility relations are \(\sys\)-formulas, and its defining properties are required to be provable in \(\sys\).
The formula translation sends source formulas to \(\sys\)-formulas, and the soundness theorems extract terms whose correctness is derivable in \(\sys\).

\subsection{Information nuclei}
\label{sec:nuclei}

We now give the formal definition of an information nucleus. It
has both syntactic and logical components: it specifies the information types
and operations used in the extracted terms, as well as the compatibility
formulas and axioms used to justify their correctness.

\begin{definition}[Information nucleus]
  \label{def:information-nucleus}
  An \emph{information nucleus over \(\sys\)} consists of:
  \begin{itemize}
    \item for each finite type \(\sigma\), a type \(\sigma^\nucleus\);
    \item for each finite type \(\sigma\), an \(\sys\)-formula \(x \compat_\sigma a\), where \(x:\sigma\) and \(a:\sigma^\nucleus\);
    \item for each finite type \(\sigma\), a term \(\eta_\sigma:\sigma\to\sigma^\nucleus\);
    \item for each pair of finite types \(\sigma\) and \(\tau\), a term \(\kappa_{\sigma,\tau}: (\sigma\to\tau^\nucleus)\to\sigma^\nucleus\to\tau^\nucleus\).
  \end{itemize}
  We write \(f^\kappa(a)\) for \(\kappa_{\sigma,\tau}(f,a)\).  For \(a,b:\sigma^\nucleus\), define
  \(
  a \nule_\sigma b
  \eqdef
  \forall x^\sigma\,(x \compat_\sigma a \to x \compat_\sigma b).
  \)
  The data are required to satisfy, provably in \(\sys\),
  \[
    \begin{aligned}
      \textup{(IN1)}\quad &
      x \compat_\sigma \eta_\sigma(x), \\
      \textup{(IN2)}\quad &
      x \compat_\sigma a
      \to f(x) \nule_\tau f^\kappa(a),
    \end{aligned}
  \]
  where the variables \(x:\sigma\), \(a:\sigma^\nucleus\), and \(f:\sigma\to\tau^\nucleus\) are universally quantified within \(\sys\).
\end{definition}

The elements of \(\sigma^\nucleus\), called \emph{information objects},
represent information associated with objects of type \(\sigma\), without
requiring this information to determine a unique exact object.  We read
\(x \compat_\sigma a\) as saying that the exact object \(x:\sigma\) is
\emph{compatible} with the information object \(a:\sigma^\nucleus\).  For
\(a,b:\sigma^\nucleus\), the relation \(a \nule_\sigma b\) says that every
exact object compatible with \(a\) is also compatible with \(b\). Thus \(b\)
is a weaker, or more permissive, information object than \(a\); equivalently,
\(a\) is at least as precise as \(b\). It is immediate that \(\nule_\sigma\)
is reflexive and transitive.

The components of an information nucleus play different roles in the
interpretation.  The assignment \(\sigma \mapsto \sigma^\nucleus\) supplies the
information types used in the realizer and challenge types of formulas.  The
relation \(x \compat_\sigma a\) gives the bounded quantifiers appearing in the
translated formulas.  The operations \(\eta\) and \(\kappa\) are used in the
extraction of realizing information from proofs: \(\eta\) produces exact
information, while \(\kappa\) propagates information through functions.  The
axioms (IN1) and (IN2) are used to justify the correctness of this extracted
information.

We work relative to an information nucleus over an ambient system \(\sys\). In what follows, type subscripts may be omitted whenever they are clear from context.

We use the compatibility relation to express when an information object is nonempty.

\begin{definition}[Inhabited information]
  \label{def:inhabited-information}
  For \(a:\sigma^\nucleus\), define
  \[
    \Inh_\sigma(a) \eqdef \exists x^\sigma \, x \compat_\sigma a .
  \]
  We say that \(a\) is \emph{inhabited} if \(\Inh_\sigma(a)\) is provable in \(\sys\).
\end{definition}

By (IN1), every exact object \(x:\sigma\) gives an inhabited information object \(\eta_\sigma(x)\).

The compatibility relation determines the bounded quantifiers used in the translation: for a formula \(A(x)\), we write
\[
  \forall x \compat a\, A(x) \eqdef \forall x\,(x \compat a \to A(x))
  \qquad\text{and}\qquad
  \exists x \compat a\, A(x) \eqdef \exists x\,(x \compat a \wedge A(x)).
\]
The preorder \(\nule_\sigma\) interacts with bounded quantification in the expected variance directions:
\begin{lemma}[Bounded monotonicity]
  \label{lem:bounded-monotonicity}
  Let \(a,b:\sigma^\nucleus\) and suppose \(a \nule_\sigma b\). Then
  \[
    \forall x \compat b\, A(x)
    \to
    \forall x \compat a\, A(x)
    \qquad\text{and}\qquad
    \exists x \compat a\, A(x)
    \to
    \exists x \compat b\, A(x).
  \]
\end{lemma}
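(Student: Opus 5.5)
The plan is to unfold the definitions of \(\nule_\sigma\) and of the bounded quantifiers and argue directly inside \(\sys\). The reasoning is purely first-order: it uses only implication and quantifier rules and never duplicates an assumption. In particular, it stays within the affine system \(\NHAoa\), which matters because \(\sys\) is only assumed to extend \(\NHAoa\).

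For the universal statement, we assume \(a \nule_\sigma b\), which unfolds to \(\forall x\,(x\compat a\to x\compat b)\), together with \(\forall x\,(x\compat b\to A(x))\). We take an arbitrary fresh \(x\) with \(x\compat a\). Instantiating the first hypothesis at \(x\) gives \(x\compat b\), and instantiating the second at \(x\) then gives \(A(x)\). Discharging \(x\compat a\) and generalising over \(x\) yields \(\forall x\compat a\,A(x)\).

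For the existential statement, we assume \(a\nule_\sigma b\) and \(\exists x\,(x\compat a\wedge A(x))\). We eliminate the existential with a fresh witness \(x\) and split the conjunction into \(x\compat a\) and \(A(x)\). Instantiating \(a\nule_\sigma b\) at \(x\) gives \(x\compat b\). We then reintroduce the conjunction and the existential to obtain \(\exists x\compat b\,A(x)\).

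The only point needing care is resource-sensitivity. In the sequent calculus with context splitting, each hypothesis must be used at most once and the rest weakened away. Each hypothesis above is indeed used exactly once: the monotonicity hypothesis is instantiated once, and so is the bounded-universal hypothesis. So no contraction is required, and I expect no real obstacle.
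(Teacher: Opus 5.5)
Your proposal is correct: unfolding \(a \nule_\sigma b\) and the bounded quantifiers and arguing directly is the evident argument, and the paper states this lemma without proof, treating it as exactly this routine unfolding. Your extra check that each hypothesis is used only once is worthwhile, since \(\sys\) may be just \(\NHAoa\) and so the derivation must avoid contraction.
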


The definition of information nucleus is deliberately abstract, but the following two instances show how it captures both exact and finite-candidate information.

\begin{example}[Exact information]
  \label{ex:exact-information}
  Take \(\sigma^\nucleus \eqdef \sigma\), let \(x\compat_\sigma a\) mean \(x=_\sigma a\), and set \(\eta_\sigma(x)\eqdef x\) and \(f^\kappa(a)\eqdef f(a)\).
  Then (IN1) is reflexivity of equality, and (IN2) follows from the substitution property of equality.
\end{example}

\begin{example}[Finite candidate information]
  \label{ex:finite-candidate-information}
  A second instance takes information objects to be finite collections of candidates.
  This may be formulated using finite collection types, or by fixing an encoding of finite collections in the finite-type language.
  Writing \(\sigma^*\) for such a finite-collection type or coding, define
  \[
    \sigma^\nucleus \eqdef \sigma^*,
    \qquad
    x\compat_\sigma a \eqdef x\in_\sigma a,
    \qquad
    \eta_\sigma(x)\eqdef [x],
    \qquad
    f^\kappa(a)\eqdef \bigcup_{x\in a} f(x).
  \]
  Here \(x\in_\sigma a\) denotes membership, \([x]\) is the singleton collection, and \(\bigcup_{x\in a} f(x)\) is the finite union, or concatenation, of the collections \(f(x)\) for \(x\) in \(a\).
  Then (IN1) says that \(x\in [x]\), and (IN2) says that if \(x\in a\), then \(f(x)\) is included in the union of the \(f(y)\) for \(y\in a\).
  This is the finite-candidate pattern underlying Herbrand-style interpretations.
\end{example}

The preceding examples recover familiar proof-interpretation patterns; the next
one illustrates that information nuclei can also enrich extracted realizers with
additional computational structure.

\begin{example}[Continuity information]
  \label{ex:continuity-information}
  Continuity nuclei from the parametrized translation of System~\(\T\)
  \cite[Section~3.3]{Xu2020Gentzen} give examples in which realizers carry
  continuity data.

  In the case of pointwise continuity, an information object for a type \(\sigma\) consists of a value component \(v: \baire\to\sigma\) and a modulus component \(m: \baire\to\N\).
  Compatibility is defined relative to an oracle \(\alpha:\baire\), written \(\compat^\alpha_\sigma\):
  \[
    x \compat^\alpha_\sigma (v,m) \eqdef x =_\sigma v(\alpha) \wedge \text{\(m(\alpha)\) witnesses the local constancy of \(v\) at \(\alpha\).}
  \]
  The unit \(\eta\) maps an exact object to the information object with constant value function and contant-zero modulus.
  The Kleisli extension \(\kappa\) composes the value components and
  propagates the continuity information by combining the modulus required for
  the input with the modulus required for the output computation.  The precise
  oracle-parametrized compatibility relation, the definitions of these
  operations, and the verification of (IN1) and (IN2) are given in
  Appendix~\ref{app:continuity-fi}.
  A uniform-continuity variant can be obtained similarly, by replacing pointwise moduli with uniform modulus information.
\end{example}

\subsection{Realizers, challenges, and propagated information}
\label{sec:realizers}

We now fix a source system \(\src\), taken to be one of \(\NHAoa\), \(\NHAo\), or \(\OHAo\), while the target system \(\sys\) remains fixed by the target convention above.
The interpretation is carried out in~\(\sys\), relative to the fixed information nucleus \(((\text{-})^\nucleus,\compat,\eta,\kappa)\) over \(\sys\).

Each formula \(A\) of the source system \(\src\) is assigned two types in the target system \(\sys\): a positive type \(\ty_A^+\) of realizers and a negative type \(\ty_A^-\) of challenges.  In the present setting, realizers are understood broadly: they may contain information about witnesses or choices, rather than exact objects themselves. These types are defined recursively from the logical form of \(A\), using the information types supplied by the fixed nucleus. Here and below, \(P\)~ranges over primitive formulas of \(\src\), including \(\bot\) and the atomic formulas of \(\src\).

\begin{definition}[Positive and negative types]
  \label{def:positive-negative-types}
  For each formula \(A\) of the source system \(\src\), define types
  \(\ty_A^+\) and \(\ty_A^-\) by recursion on \(A\):
  \[
    \begin{array}{rclcrcl}
      \ty^+_P           & \eqdef & \One
                        & \qquad &
      \ty^-_P           & \eqdef & \One
      \\[0.3em]

      \ty^+_{B\wedge C} & \eqdef & \ty^+_B\times\ty^+_C
                        & \qquad &
      \ty^-_{B\wedge C} & \eqdef & \ty^-_B\times\ty^-_C
      \\[0.3em]

      \ty^+_{B\vee C}
                        & \eqdef &
      \Two^\nucleus\times\ty^+_B\times\ty^+_C
                        &
                        &
      \ty^-_{B\vee C}
                        & \eqdef &
      (\ty^-_B)^\nucleus\times(\ty^-_C)^\nucleus
      \\[0.3em]

      \ty^+_{B\to C}
                        & \eqdef &
      (\ty^+_B\to\ty^+_C)
      \times
      (\ty^+_B\times\ty^-_C\to(\ty^-_B)^\nucleus)
                        &
                        &
      \ty^-_{B\to C}
                        & \eqdef &
      \ty^+_B\times\ty^-_C
      \\[0.3em]

      \ty^+_{\forall x^\sigma B}
                        & \eqdef &
      \sigma^\nucleus\to\ty^+_B
                        &
                        &
      \ty^-_{\forall x^\sigma B}
                        & \eqdef &
      \sigma^\nucleus\times\ty^-_B
      \\[0.3em]

      \ty^+_{\exists x^\sigma B}
                        & \eqdef &
      \sigma^\nucleus\times\ty^+_B
                        &
                        &
      \ty^-_{\exists x^\sigma B}
                        & \eqdef &
      (\ty^-_B)^\nucleus .
    \end{array}
  \]
\end{definition}

In these clauses, information types enter exactly where the interpretation uses information rather than exact data: \(\Two^\nucleus\) for the chosen disjunct, \(\sigma^\nucleus\) for quantified objects, and \((\ty_B^-)^\nucleus\) for bounds on challenges.

We next lift the information preorder from information objects to realizers of
formulas.  Recall that \(a \nule_\sigma b\) means that \(a\) is at least as
precise as \(b\).  For each formula \(A\), we define a corresponding
preorder \(r \nule_A s\) on realizers \(r,s:\ty_A^+\).

\begin{definition}[Preorder on realizers]
  \label{def:realizer-preorder}
  For each formula \(A\) of \(\src\), define an \(\sys\)-formula
  \(r \nule_A s\), where \(r,s:\ty_A^+\), by recursion on \(A\):
  \[
    \begin{array}{rcl}
      r \nule_P s
       & \eqdef &
      \top,
      \\[0.3em]

      r \nule_{B\wedge C} s
       & \eqdef &
      r_1 \nule_B s_1 \wedge r_2 \nule_C s_2,
      \\[0.3em]

      r \nule_{B\vee C} s
       & \eqdef &
      r_1 \nule_{\Two} s_1
      \wedge r_2 \nule_B s_2
      \wedge r_3 \nule_C s_3,
      \\[0.3em]

      r \nule_{B\to C} s
       & \eqdef &
      \bigl(\forall x^{\ty_B^+}\, r_1(x) \nule_C s_1(x)\bigr)
      \wedge
      \bigl(\forall w^{\ty_B^+\times\ty_C^-}\,
      r_2(w) \nule_{\ty_B^-} s_2(w)\bigr),
      \\[0.3em]

      r \nule_{\forall x^\sigma B} s
       & \eqdef &
      \forall a^{\sigma^\nucleus}\, r(a) \nule_B s(a),
      \\[0.3em]

      r \nule_{\exists x^\sigma B} s
       & \eqdef &
      r_1 \nule_\sigma s_1 \wedge r_2 \nule_B s_2 .
    \end{array}
  \]
\end{definition}

Thus \(r \nule_A s\) means that \(r\) is at least as precise as \(s\) as a
realizer of \(A\).  It is immediate by induction on \(A\) that \(\nule_A\) is
reflexive and transitive.

We now lift the Kleisli extension \(\kappa\) of the nucleus from information objects to
realizers of formulas.  Given a family of realizers \(f:\sigma\to\ty_A^+\), the
lifted extension turns information \(a:\sigma^\nucleus\) about an input into a
realizer of \(A\).  The resulting realizer contains enough information to cover
all \(f(x)\) with \(x \compat_\sigma a\).

\begin{definition}[Kleisli extension to realizers]
  \label{def:lke}
  For each finite type \(\sigma\) and formula \(A\) of \(\src\), define a term
  \[
    \lke_{\sigma,A} : (\sigma \to \ty_A^+) \to \sigma^\nucleus \to \ty_A^+
  \]
  by recursion on \(A\):
  \[
    \begin{array}{rcl}
      \lke_{\sigma,P}(f,a)
       & \eqdef &
      \star,
      \\[0.3em]

      \lke_{\sigma,B\wedge C}(f,a)
       & \eqdef &
      \langle
      \lke_{\sigma,B}(\lambda x.\,(f(x))_1,\,a),\;
      \lke_{\sigma,C}(\lambda x.\,(f(x))_2,\,a)
      \rangle,
      \\[0.3em]

      \lke_{\sigma,B\vee C}(f,a)
       & \eqdef &
      \langle
      (\lambda x.\,(f(x))_1)^\kappa(a),\;
      \lke_{\sigma,B}(\lambda x.\,(f(x))_2,\,a),\;
      \lke_{\sigma,C}(\lambda x.\,(f(x))_3,\,a)
      \rangle,
      \\[0.3em]

      \lke_{\sigma,B\to C}(f,a)
       & \eqdef &
      \langle
      \lambda r^{\ty_B^+}.\, \lke_{\sigma,C}(\lambda x.\,(f(x))_1(r),\,a),\;
      \lambda w^{\ty_B^+\times\ty_C^-}.\, (\lambda x.\,(f(x))_2(w))^\kappa(a)
      \rangle,
      \\[0.3em]

      \lke_{\sigma,\forall y^\tau B}(f,a)
       & \eqdef &
      \lambda b^{\tau^\nucleus}.\,
      \lke_{\sigma,B}(\lambda x.\,f(x,b),\,a),
      \\[0.3em]

      \lke_{\sigma,\exists y^\tau B}(f,a)
       & \eqdef &
      \langle
      (\lambda x.\,(f(x))_1)^\kappa(a),\;
      \lke_{\sigma,B}(\lambda x.\,(f(x))_2,\,a)
      \rangle .
    \end{array}
  \]
\end{definition}

The lifted Kleisli extension satisfies the realizer-level analogue of (IN2): information about an input is enough to propagate a family of realizers indexed by exact inputs.

\begin{lemma}[Realizer-level propagation]
  \label{lem:lke-propagation}
  For every finite type \(\sigma\) and every formula \(A\) of \(\src\), \(\sys\)
  proves
  \[
    x \compat_\sigma a
    \to
    f(x) \nule_A \lke_{\sigma,A}(f,a)
  \]
  for all \(x:\sigma\), \(a:\sigma^\nucleus\), and
  \(f:\sigma\to\ty_A^+\).
\end{lemma}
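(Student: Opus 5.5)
We argue by induction on the formula \(A\), proving the statement simultaneously for all finite types \(\sigma\) and all families \(f:\sigma\to\ty_A^+\). We fix \(x:\sigma\), \(a:\sigma^\nucleus\) with \(x\compat_\sigma a\) and unfold the definitions of \(\nule_A\) (Definition~\ref{def:realizer-preorder}) and of \(\lke_{\sigma,A}\) (Definition~\ref{def:lke}) clause by clause. The only nucleus-level fact needed is (IN2), and the only realizer-level fact needed is the induction hypothesis. Throughout we use that \(\beta\)-reduction and the projection equations are available in \(\sys\), so that, for instance, \((\lambda x.\,(f(x))_1)(x)\) and \((f(x))_1\) may be identified inside compatibility formulas via the equality axioms.

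The primitive case \(A\equiv P\) is trivial, since \(\nule_P\) is \(\top\). For \(B\wedge C\), the components of \(\lke_{\sigma,B\wedge C}(f,a)\) are \(\lke_{\sigma,B}(\lambda x.(f(x))_1,a)\) and \(\lke_{\sigma,C}(\lambda x.(f(x))_2,a)\). The induction hypothesis applied to the families \(\lambda x.(f(x))_i\) gives \((f(x))_1\nule_B\ldots\) and \((f(x))_2\nule_C\ldots\), which is exactly what is required.

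For \(B\vee C\) and \(\exists y^\tau B\) there is additionally an information component of the form \((\lambda x.(f(x))_1)^\kappa(a)\). Here (IN2), applied to the function \(\lambda x.(f(x))_1:\sigma\to\Two^\nucleus\) (respectively \(\sigma\to\tau^\nucleus\)), yields \((f(x))_1\nule (\lambda x.(f(x))_1)^\kappa(a)\). The remaining components follow from the induction hypothesis.

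For \(\forall y^\tau B\), we fix \(b:\tau^\nucleus\) and apply the induction hypothesis to the family \(\lambda x.f(x,b)\). For \(B\to C\), the first conjunct requires, for each \(r:\ty_B^+\), that \((f(x))_1(r)\nule_C \lke_{\sigma,C}(\lambda x.(f(x))_1(r),a)\); this follows from the induction hypothesis for \(C\) applied to the family \(\lambda x.(f(x))_1(r)\). The second conjunct requires, for each \(w\), that \((f(x))_2(w)\nule_{\ty_B^-}(\lambda x.(f(x))_2(w))^\kappa(a)\); this is (IN2) at the type \(\ty_B^-\), with \(\tau\equiv \ty_B^-\).

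The main point needing care is that the induction must be over formulas with \(\sigma\) and \(f\) arbitrary. This is because the recursive calls change \(f\), by projecting, applying to \(r\) or \(b\), and so on, while keeping \(\sigma\), \(x\), and \(a\) fixed. The induction hypothesis must therefore be the universally quantified \(\sys\)-statement, applied at a different family. A second point to check is that every use of \(\kappa\) in Definition~\ref{def:lke} occurs at a type of the form \(\tau^\nucleus\), where (IN2) applies directly, and that the preorders \(\nule_{\Two}\), \(\nule_\tau\), \(\nule_{\ty_B^-}\) appearing in Definition~\ref{def:realizer-preorder} are the nucleus preorders of Definition~\ref{def:information-nucleus}. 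Both are true by inspection of the two definitions, so no further obstacle is expected.
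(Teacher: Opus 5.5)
Your case analysis is the paper's own proof: induction on \(A\) with the family \(f\) generalized, (IN2) for the components built with \(\kappa\), and the induction hypothesis for the rest. Each case is unfolded correctly.

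However, one step is justified neither by your argument nor by the paper's one-line proof, and it fails for the target \(\sys=\NHAoa\), which this section explicitly allows. In every case with several components (conjunction, disjunction, implication, existential) you discharge each conjunct of \(f(x)\nule_A\lke_{\sigma,A}(f,a)\) from the \emph{same} hypothesis \(x\compat_\sigma a\). The induction hypothesis is a closed theorem and may be instantiated as often as you like, but the hypothesis \(x\compat_\sigma a\) may not. Conjunction-right in \(\NHAoa\) splits the context, so from \(x\compat a\vdash P\) and \(x\compat a\vdash Q\) you only obtain \(x\compat a,\,x\compat a\vdash P\wedge Q\). Removing the second copy is a contraction on an arbitrary \(\sys\)-formula, which \(\NHAoa\) does not provide.

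This is more than a presentational omission: for \(\sys=\NHAoa\) the statement itself fails. Take the exact nucleus, \(\sigma\eqdef\N\to\N\), \(A\eqdef(\exists y^\sigma\top)\wedge(\exists z^\sigma\top)\), and \(f\eqdef\lambda x.\pair{\pair{x}{\star}}{\pair{x}{\star}}\). The claim then unfolds to \(x=a\to(x\nule_\sigma a\wedge\top)\wedge(x\nule_\sigma a\wedge\top)\), essentially a duplication of a higher-type equality.

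To see that this is unprovable, value formulas over closed terms in the three-element \L{}ukasiewicz chain \(\{0,\frac12,1\}\):
\(\wedge\) is \(\max(0,p+q-1)\), \(\to\) is its residual, and \(\forall,\exists\) are inf and sup. An equation \(s=t\) gets value \(1\) if \(s,t\) are convertible, \(\frac12\) if they denote the same element of the full set-theoretic model without being convertible, and \(0\) otherwise. Base-type equality is then two-valued, since closed base terms normalize. All affine rules, the equality axioms (including congruence), the arithmetic axioms and the closed induction rule of \(\NHAoa\) are valid in this semantics.

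Now set \(x\eqdef\lambda n.\rec(0,\lambda k\,m.\suc m,n)\) and \(a\eqdef\lambda n.n\). Then \(x\nule_\sigma a\) gets value \(\frac12\), so the instance has value \(\frac12\to 0=\frac12\) and is not provable.

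So your argument, like the paper's, is correct when \(\sys\) has contraction (e.g.\ \(\sys=\NHAo\) or an extension). More generally it works whenever the compatibility formulas are provably duplicable in \(\sys\), for instance when they are decidable. For an affine target, such a hypothesis has to be added.
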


\begin{proof}
  By induction on \(A\).  The primitive case is immediate.  The conjunction,
  disjunction, implication, universal, and existential cases follow from the
  induction hypotheses, using (IN2) in the components where the
  definition of \(\lke\) applies \(\kappa\).
\end{proof}

The lifted Kleisli extension yields two derived operations that will be used later in the soundness proof: a lifted Boolean case operator and a lifted primitive recursor.

\begin{definition}[Lifted case distinction]
  \label{def:lifted-if}
  For each formula \(A\) of \(\src\), define
  \[
    \begin{aligned}
      \lif_A & : \Two^\nucleus\to\ty_A^+\to\ty_A^+\to\ty_A^+ \\
      \lif_A(p,r,s)
             & \eqdef
      \lke_{\Two,A}
      (\lambda b^\Two.\,\mathsf{if}_{\ty_A^+}(b,r,s),\, p).
    \end{aligned}
  \]
\end{definition}

\begin{definition}[Lifted primitive recursion]
  \label{def:lifted-rec}
  For each formula \(A\) of \(\src\), define
  \[
    \begin{aligned}
      \lrec_A & : \ty_A^+ \to (\N^\nucleus \to \ty_A^+ \to \ty_A^+) \to \N^\nucleus \to \ty_A^+ \\
      \lrec_A(r,f,p)
              & \eqdef
      \lke_{\N,A}
      (\lambda n^\N.\,\mathsf{rec}_{\ty_A^+}(r,f \circ \eta,n),\;p).
    \end{aligned}
  \]
\end{definition}

The correctness of the above operations is immediate from Lemma~\ref{lem:lke-propagation}.
That is, the lifted operations produce information objects compatible with the results of the corresponding exact operations.

\begin{corollary}[Correctness of lifted case distinction]
  \label{cor:lifted-if}
  For every formula \(A\) of \(\src\), \(\sys\) proves
  \[
    \sft \compat p \to r \nule_A \lif_A(p,r,s)
    \qquad\text{and}\qquad
    \sff \compat p \to s \nule_A \lif_A(p,r,s).
  \]
\end{corollary}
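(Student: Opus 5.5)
The plan is to instantiate Lemma~\ref{lem:lke-propagation} with \(\sigma \eqdef \Two\), the information object \(a \eqdef p\), and the family \(f \eqdef \lambda b^\Two.\,\mathsf{if}_{\ty_A^+}(b,r,s)\). By Definition~\ref{def:lifted-if}, \(\lif_A(p,r,s)\) is literally \(\lke_{\Two,A}(f,p)\). The lemma then gives, provably in \(\sys\), that for every \(b:\Two\),
\[
  b \compat_\Two p \to f(b) \nule_A \lif_A(p,r,s).
\]

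For the first implication, I will specialise \(b\) to \(\sft\). Under the hypothesis \(\sft \compat p\), this yields \(f(\sft) \nule_A \lif_A(p,r,s)\). By \(\beta\)-reduction and the defining equation \(\mathsf{if}(\sft,r,s) = r\), the left-hand side \(f(\sft)\) is equal in \(\sys\) to \(r\). The second implication is symmetric: take \(b \eqdef \sff\) and use \(\mathsf{if}(\sff,r,s) = s\).

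The only point needing care is the step from \(f(\sft) \nule_A \lif_A(p,r,s)\) to \(r \nule_A \lif_A(p,r,s)\). I will do this by rewriting inside the \(\sys\)-formula \(\nule_A\). Since \(\sys\) extends \(\NHAoa\), it has primitive equality at every type, with congruence and substitutivity for formulas. So from the provable equation \(f(\sft) =_{\ty_A^+} r\) I can substitute into the formula \(y \nule_A \lif_A(p,r,s)\) and obtain the claim. If instead the defining equations are treated as definitional conversions, this step is immediate.

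I do not expect a real obstacle: the whole content lies in Lemma~\ref{lem:lke-propagation}, and the corollary amounts to this instantiation followed by computation.
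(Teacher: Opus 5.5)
Your proposal is correct and matches the paper, which simply records the corollary as immediate from Lemma~\ref{lem:lke-propagation}. You instantiate it exactly as intended, with \(f \eqdef \lambda b^\Two.\,\mathsf{if}(b,r,s)\) and \(a \eqdef p\), and then use the defining equations of \(\mathsf{if}\). Your rewriting step is also fine even in the affine setting: the equation \(\mathsf{if}(\sft,r,s)=r\) is a theorem of \(\sys\), so it can be used as often as substitutivity requires without any contraction on the hypothesis \(\sft\compat p\).
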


\begin{corollary}[Correctness of lifted recursion]
  \label{cor:lifted-rec}
  For every formula \(A\) of \(\src\), \(\sys\) proves
  \[
    n \compat_\N p
    \to
    \mathsf{rec}_{\ty_A^+}(r, f \circ \eta,n)
    \nule_A
    \lrec_A(r,f,p)
  \]
  for all \(n:\N\), \(p:\N^\nucleus\), \(r:\ty_A^+\), and \(f:\N^\nucleus\to\ty_A^+\to\ty_A^+\).
\end{corollary}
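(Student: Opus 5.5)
This is a direct instance of Lemma~\ref{lem:lke-propagation}, with $\sigma \eqdef \N$, the family $g \eqdef \lambda n^\N.\,\mathsf{rec}_{\ty_A^+}(r,f\circ\eta,n)$ of type $\N\to\ty_A^+$, and the information object $p:\N^\nucleus$. By Definition~\ref{def:lifted-rec} we have
\[
  \lrec_A(r,f,p) \equiv \lke_{\N,A}(g,p).
\]
So the lemma yields $n \compat_\N p \to g(n) \nule_A \lrec_A(r,f,p)$, provably in $\sys$, for all $n$ and $p$.

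It remains to identify $g(n)$ with $\mathsf{rec}_{\ty_A^+}(r,f\circ\eta,n)$, which holds by $\beta$-conversion. I would proceed in two steps.

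\begin{enumerate}
  \item Fix $r$ and $f$, and note that $g$ is a well-typed term. Since $\eta_\N:\N\to\N^\nucleus$, the composite $f\circ\eta:\N\to\ty_A^+\to\ty_A^+$ has exactly the step-function type that $\mathsf{rec}_{\ty_A^+}$ requires.
  \item Instantiate Lemma~\ref{lem:lke-propagation} at $g$, $n$, $p$, and rewrite $g(n)$ using the $\beta$-equation, which is an axiom of the term language.
\end{enumerate}

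The rewriting in step 2 needs the $\beta$-equation to be usable inside the formula $\cdot \nule_A \cdot$. In $\NHAoa$ and its extensions this follows from substitutivity of the primitive equality at type $\ty_A^+$, applied to the $\sys$-formula $x \nule_A \lrec_A(r,f,p)$. If the system treats $\beta$-convertible terms as syntactically identified, the step is trivial.

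The only point needing care is this $\beta$-step; in particular it does not use the compatibility of $n$ beyond what the lemma consumes. The argument is exactly parallel to the proof of Corollary~\ref{cor:lifted-if}, which instantiates the same lemma at $\sigma\eqdef\Two$ with the family $\lambda b.\,\mathsf{if}(b,r,s)$, evaluated at $\sft$ and $\sff$.
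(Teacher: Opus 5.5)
Your proposal is correct and matches the paper, which just says the corollary is immediate from Lemma~\ref{lem:lke-propagation} instantiated at $\sigma \eqdef \N$ with the family $\lambda n^\N.\,\mathsf{rec}_{\ty_A^+}(r,f\circ\eta,n)$ and the information object $p$, exactly as you do. Your care about the $\beta$-step is harmless: the paper uses $\beta$-conversion at the meta-level, so that step is trivial there.
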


\subsection{The formula translation induced by an information nucleus}
\label{sec:formulas}

We now define the translated formula.  For each source formula \(A\), a
realizer \(r:\ty_A^+\) is tested against a challenge \(u:\ty_A^-\), producing
an \(\sys\)-formula \(|A|^r_u\).  An information object
\(a:(\ty_A^-)^\nucleus\) is called a \emph{challenge bound} for \(A\): it
represents information about possible challenges to \(A\).  Thus the bounded
formula \(\forall z\compat a\,|A|^r_z\) says that \(r\) meets every exact
challenge \(z\) compatible with the bound \(a\).  The clauses below follow the
logical form of \(A\), while the information nucleus controls how witnesses,
choices, and challenges are represented by information objects.

\begin{definition}[Formula translation]
  \label{def:formula-translation}
  For each formula \(A\) of \(\src\), define an \(\sys\)-formula \(|A|^r_u\),
  where \(r:\ty_A^+\) and \(u:\ty_A^-\), by recursion on \(A\):
  \[
    \begin{array}{rcl}
      |P|^r_u
       & \eqdef &
      P,
      \\[0.3em]

      |B\wedge C|^r_u
       & \eqdef &
      |B|^{r_1}_{u_1}
      \wedge
      |C|^{r_2}_{u_2},
      \\[0.3em]

      |B\vee C|^r_u
       & \eqdef &
      \bigl(\sft \compat r_1 \wedge \forall v \compat u_1\, |B|^{r_2}_v\bigr)
      \vee
      \bigl(\sff \compat r_1 \wedge \forall v \compat u_2\, |C|^{r_3}_v\bigr),
      \\[0.3em]

      |B\to C|^r_u
       & \eqdef &
      \bigl(\forall v \compat r_2(u)\, |B|^{u_1}_v\bigr)
      \to
      |C|^{r_1(u_1)}_{u_2},
      \\[0.3em]

      |\forall x^\sigma B(x)|^r_u
       & \eqdef &
      \forall x \compat u_1\, |B(x)|^{r(u_1)}_{u_2},
      \\[0.3em]

      |\exists x^\sigma B(x)|^r_u
       & \eqdef &
      \exists x \compat r_1\, \forall v \compat u\, |B(x)|^{r_2}_v .
    \end{array}
  \]
\end{definition}

The disjunction clause uses information \(r_1:\Two^\nucleus\) about the chosen side.
The universal clause is challenged by information \(u_1:\sigma^\nucleus\) about the quantified object, while the existential clause realizes the witness by information \(r_1:\sigma^\nucleus\).
In the implication and existential clauses, challenges to the antecedent or body are bounded by information objects.

The preorder on realizers is compatible with the translation: replacing a
realizer by a less precise one preserves validity against the same
challenge.

\begin{lemma}[Monotonicity in realizers]
  \label{lem:formula-monotonicity}
  For every formula \(A\) of \(\src\), \(\sys\) proves
  \[
    |A|^r_u \wedge r \nule_A s
    \to
    |A|^s_u
  \]
  for all \(r,s:\ty_A^+\) and \(u:\ty_A^-\).
\end{lemma}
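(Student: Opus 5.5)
The proof goes by (meta-level) induction on the structure of \(A\). For each \(A\) we produce an \(\sys\)-derivation of the displayed implication with \(r,s,u\) free, and in the induction hypothesis we allow arbitrary realizers and challenges for the immediate subformulas. Every step uses only intuitionistic reasoning plus the definitions of \(\nule_A\) (Definition~\ref{def:realizer-preorder}) and of \(|A|^r_u\) (Definition~\ref{def:formula-translation}). It also uses the observation, directly from the definition of \(\nule_\sigma\), that \(a\nule_\sigma b\) and \(x\compat a\) give \(x\compat b\); together with Lemma~\ref{lem:bounded-monotonicity}, this lets us move bounded quantifiers along \(\nule_\sigma\). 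No contraction beyond what \(\sys\) itself supplies is needed, since each hypothesis is used once per branch.

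\emph{Primitive and conjunction cases.} The primitive case is trivial, because \(|P|^r_u\equiv P\equiv|P|^s_u\). For conjunction, split both the hypothesis and \(r\nule_{B\wedge C}s\) into components and apply the induction hypotheses for \(B\) (at \(r_1,s_1,u_1\)) and for \(C\) (at \(r_2,s_2,u_2\)).

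\emph{Disjunction case.} Do a case analysis on the outer disjunction in \(|B\vee C|^r_u\). In the left branch, \(\sft\compat r_1\) and \(r_1\nule_\Two s_1\) give \(\sft\compat s_1\). For every \(v\compat u_1\), the induction hypothesis for \(B\) with \(r_2\nule_B s_2\) turns \(|B|^{r_2}_v\) into \(|B|^{s_2}_v\). The right branch is symmetric, using \(r_3\nule_C s_3\). Note that the challenge bounds \(u_1,u_2\) are unchanged, so no variance issue arises here.

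\emph{Existential and universal cases.} For \(\exists x^\sigma B\), take the witness \(x\compat r_1\). Then \(r_1\nule_\sigma s_1\) gives \(x\compat s_1\), and the induction hypothesis for \(B(x)\) under the bound \(\forall v\compat u\) gives \(|B(x)|^{s_2}_v\). For \(\forall x^\sigma B\), instantiate \(r\nule s\) at \(a\eqdef u_1\) to get \(r(u_1)\nule_B s(u_1)\). Then, for each \(x\compat u_1\), apply the induction hypothesis for \(B(x)\) with challenge \(u_2\).

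\emph{Implication case.} This is the only case with a contravariant component, and I expect it to be the one needing care. Assume \(|B\to C|^r_u\), namely \(\bigl(\forall v\compat r_2(u)\,|B|^{u_1}_v\bigr)\to|C|^{r_1(u_1)}_{u_2}\). Assume also \(\forall v\compat s_2(u)\,|B|^{u_1}_v\), and aim for \(|C|^{s_1(u_1)}_{u_2}\).
\begin{enumerate}
  \item Since \(r_2(u)\nule_{\ty_B^-}s_2(u)\), Lemma~\ref{lem:bounded-monotonicity} weakens the bound from \(s_2(u)\) to \(r_2(u)\). This yields the antecedent \(\forall v\compat r_2(u)\,|B|^{u_1}_v\).
  \item The hypothesis \(|B\to C|^r_u\) then gives \(|C|^{r_1(u_1)}_{u_2}\).
  \item Instantiating the first component of \(r\nule_{B\to C}s\) at \(x\eqdef u_1\) gives \(r_1(u_1)\nule_C s_1(u_1)\). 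The induction hypothesis for \(C\) then yields \(|C|^{s_1(u_1)}_{u_2}\).
\end{enumerate}
The realizer \(u_1\) of \(B\) belongs to the challenge, not to \(r\) or \(s\). So no induction hypothesis for \(B\) is needed, and the direction of \(\nule\) on the second component of the realizer is exactly the one that makes this step go through.
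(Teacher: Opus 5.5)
Your proof is correct and takes the same route as the paper. The paper only says that the lemma follows by induction on \(A\) from the induction hypotheses and bounded monotonicity, and your case analysis supplies exactly the detail it leaves implicit, including the contravariant use of \(r_2(u)\nule s_2(u)\) in the implication case, which needs no hypothesis for \(B\).

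One minor refinement concerns the disjunction case. Since \(\vee\mathrm{L}\) in \(\NHAoa\) splits the context, reusing \(r_1\nule_\Two s_1\) in both branches is not literally justified by ``once per branch''. It is nevertheless harmless, because the shared-context rule is derivable affinely by currying the shared hypothesis into the goal: from \(H,E\vdash D\) and \(H,F\vdash D\) obtain \(E\vdash H\to D\) and \(F\vdash H\to D\), apply \(\vee\mathrm{L}\) with empty side contexts to get \(E\vee F\vdash H\to D\), and cut against \(H,\,H\to D\vdash D\).
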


\begin{proof}
  By induction on \(A\).
  The primitive case is immediate.
  The other cases follow from the induction hypotheses and bounded monotonicity (Lemma~\ref{lem:bounded-monotonicity}).
\end{proof}

\section{Affine soundness}
\label{sec:affine-soundness}

We now prove the first payoff of the interface developed in Section~\ref{sec:interface}: an information nucleus suffices to interpret the affine system \(\NHAoa\).
The formula translation and the realizer operations defined above use only the nucleus, and the absence of contraction means that no operation is needed for combining the challenge bounds generated by duplicated uses of the same assumption.

\medskip
\noindent\textbf{Notation.}
For a context \(\Gamma \equiv C_1,\ldots,C_n\), variables
\(\vec x \equiv x_1,\ldots,x_n\) with \(x_i:\ty_{C_i}^+\), and challenge
bounds \(\vec u \equiv u_1,\ldots,u_n\) with
\(u_i:(\ty_{C_i}^-)^\nucleus\), we abbreviate the context consisting of translated assumptions by
\[
  \forall \vec z \compat \vec u\, |\Gamma|^{\vec x}_{\vec z}
  \qquad\text{for}\qquad
  \forall z_1 \compat u_1\, |C_1|^{x_1}_{z_1},
  \ldots,
  \forall z_n \compat u_n\, |C_n|^{x_n}_{z_n}.
\]

The formulation of soundness below follows the style of Schwichtenberg and Wainer~\cite[Section~7.4]{SchwichtenbergWainer2011}, which is technically well suited for implementation.

\begin{theorem}[Affine soundness]
  \label{thm:affine-soundness}
  Let the target system \(\sys\) be \(\NHAoa\) or an extension of \(\NHAoa\).
  Let \(((\text{-})^\nucleus,\compat,\eta,\kappa)\) be an information nucleus over \(\sys\).
  If
  \[
    \Gamma \vdash_{\NHAoa} A,
  \]
  where \(\Gamma \equiv C_1,\ldots,C_n\), then, with fresh variables
  \(x_i:\ty_{C_i}^+\) and \(y:\ty_A^-\), one can extract terms
  \[
    r:\ty_A^+
    \qquad\text{and}\qquad
    u_i:(\ty_{C_i}^-)^\nucleus
  \]
  such that, whenever all information-object components occurring in the conclusion challenge \(y:\ty_A^-\) are inhabited, each extracted challenge bound \(u_i\) is inhabited, and
  \[
    \forall \vec z \compat \vec u\, |\Gamma|^{\vec x}_{\vec z}
    \;\vdash_{\sys}
    |A|^r_y .
  \]
  Moreover, writing \(\FV(\text{-})\) for the set of free variables, the extracted terms satisfy
  \[
    \FV(r) \subseteq \FV(\Gamma,A)\cup\{x_1,\ldots,x_n\}
    \quad\text{and}\quad
    \FV(u_i) \subseteq \FV(\Gamma,A)\cup\{x_1,\ldots,x_n,y\}
  \]
  for each \(i=1,\ldots,n\).
  In particular, the challenge variable \(y\) does not occur free in~\(r\).
\end{theorem}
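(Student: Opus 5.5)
The proof goes by induction on the derivation of \(\Gamma \vdash_{\NHAoa} A\), in the style of Schwichtenberg and Wainer. The statement is strengthened to carry the free-variable conditions and the inhabitation clause, so that they are available as induction hypotheses. For each rule I take the terms extracted from the premises, with their realizer variables \(\vec x\) and challenge variable \(y\), and build \(r\) and \(\vec u\) for the conclusion. The correctness proof in \(\sys\) is assembled from the premises' correctness proofs. It also uses Lemma~\ref{lem:bounded-monotonicity}, Lemma~\ref{lem:formula-monotonicity}, Lemma~\ref{lem:lke-propagation}, and the corollaries for \(\lif\) and \(\lrec\).

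The axiom-like cases are handled first.
\begin{itemize}
  \item For the identity sequent \(A \vdash A\), take \(r \eqdef x_1\) and \(u_1 \eqdef \eta(y)\). Then (IN1) gives \(y \compat \eta(y)\), which yields both correctness and inhabitation.
  \item Weakening takes \(u_i\) to be \(\eta\) of a canonical closed term of \(\ty^-_{C_i}\). Such terms exist at every type because \(\One\) is present, and they are inhabited by (IN1).
  \item Equality axioms, congruence, and the arithmetic axioms have primitive or Harrop-like translations, so trivial realizers suffice.
\end{itemize}

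The logical rules are treated next. Context splitting means each assumption occurs in exactly one premise, so its bound is simply inherited.
\begin{itemize}
  \item For \({\to}\)-right, from \(\Gamma,B\vdash C\) with terms \(r', u', u_B\) depending on \(x_B\) and \(y\): set
  \(r \eqdef \langle\lambda x_B.\,r',\ \lambda w.\,u_B[x_B\eqsubst w_1,y\eqsubst w_2]\rangle\).
  The bounds \(u_i\) are those of the premise with \(x_B \eqsubst y_1\) and \(y \eqsubst y_2\). The premise's free-variable condition, that \(y\) is not free in \(r'\), is exactly what makes \(r\) well typed and independent of the challenge.
  \item For \({\to}\)-left, from \(\Gamma\vdash B\) and \(\Delta, C\vdash D\): the realizer of \(B\) and a challenge \(y_C\) are fed into the assumption \(x_0:\ty^+_{B\to C}\). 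The bound for the implication is obtained with \(\kappa\), namely \((\lambda z.\,\langle r_B, z\rangle)^\kappa(u_C)\). The first premise's challenge variable is then instantiated by \(x_{0,2}(\langle r_B, z\rangle)\) for \(z\compat u_C\); this is where (IN2) and Lemma~\ref{lem:lke-propagation} enter. The bounds from \(\Gamma\) are collected through \(\kappa\) over \(u_C\) in the same way.
  \item The rules for \(\forall\), \(\exists\) and \(\vee\) follow the same pattern. Witnesses \(t\) become \(\eta(t)\). Eliminations on a challenge bound use \(\lke\), and eliminations on a disjunction use \(\lif\).
  \item For \(\vee\)-left, the Boolean information \(x_1:\Two^\nucleus\) is passed through \(\lif\). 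The challenge bounds of the two branches are combined into the pair of bounds that forms \(\ty^-_{B\vee C}\). Bounds for the shared context \(\Gamma\) are combined by \(\kappa\) over \(x_1\) with an \(\mathsf{if}\) selecting the branch, which requires no union.
\end{itemize}

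For the induction rule, the premises are closed. The hypotheses therefore give a closed realizer \(r_0\) of \(A(0)\) and a closed realizer \(s\) of \(\forall n(A(n)\to A(\suc n))\). Set
\[
  r \eqdef \lambda p.\,\lrec(r_0,\lambda q\,z.\,(s\,q)_1(z),p).
\]
Correctness is proved by an inner induction on \(n\) in \(\sys\), which is available because \(\sys\) extends \(\NHAoa\) and the statement is closed. This inner induction shows that \(\mathsf{rec}(r_0,\ldots,n)\) realizes \(A(n)\) against every challenge. Corollary~\ref{cor:lifted-rec} and Lemma~\ref{lem:formula-monotonicity} then transfer this to \(x\compat p\). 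The antecedent of the step is discharged using the bound \((s\,\eta(n))_2\), which only needs to hold on a single challenge thanks to the universal statement of the inner induction.

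I expect the main obstacle to be the inhabitation side condition together with the bounded-quantifier bookkeeping in \({\to}\)-left and \(\vee\)-left. There a context bound is produced by \(\kappa\) over an information object coming from a challenge, and its inhabitation has to be derived from inhabitation of that object via (IN1) and (IN2). It is precisely this that lets the bounded \(\forall\) be instantiated and the antecedent of an implication be discharged. The case \(\exists\)-left is similarly delicate, since the eigenvariable ranges over an information object.
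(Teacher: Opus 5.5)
Your overall route is the paper's: induction on the \(\NHAoa\)-derivation, \(\eta\) on displayed witnesses, \(\kappa\) and \(\lke\) to collect data over information objects, \(\lif\) for \(\vee\)L, and \(\lrec\) plus an inner closed induction in \(\sys\) for \(\mathrm{ind}\). Your identity, weakening, \(\to\)R and induction cases agree with it. The gap is in the inhabitation invariant. First, you never treat \(\wedge\)L. You also skip cut, where the first premise's bounds must be uniformized by \(\kappa\) over the cut formula's bound. Your claim that under context splitting bounds are ``simply inherited'' fails exactly at \(\wedge\)L. There the bounds \(v,w\) for the two conjuncts must be merged into one bound of product type, which for an arbitrary nucleus can only be the derived pairing \(v\otimes w\). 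Recovering \(\forall z\compat v\,|B|^{x_1}_z\) from \(\forall q\compat v\otimes w\,|B\wedge C|^x_q\) then requires some \(z'\compat w\). In the paper's construction this is the only place inhabitation is consumed. It plays no role in discharging the antecedent in \(\to\)L, which needs only (IN2) and the \(\kappa\)-collected bound.

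Second, in \(\to\)L the bounds for \(\Gamma\) are collected over \(x_{0,2}(\pair{r_B}{z})\). That object is supplied by the realizer variable for \(B\to C\), not by a challenge, so its inhabitation does not follow from that of the conclusion challenge via (IN1) and (IN2), and \(\wedge\)L then breaks. Concretely, use the finite-candidate nucleus with the empty collection allowed, \(B\equiv C\equiv\top\) and \(E\equiv(n=_\N m)\), and let \(g\) be the realizer variable for \(B\).

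From \(B\vdash B\) and \(C\vdash C\), \(\to\)L yields \(B,B\to C\vdash C\), with bound for \(B\) equal (up to a union of singletons) to \(x_{0,2}(g,y_1)\). Then \(\wedge\)R with \(E\vdash E\), followed by \(\wedge\)L on \(B,E\), gives \(B\wedge E,\,B\to C\vdash C\wedge E\) with bound \(x_{0,2}(g,y_1)\otimes[y_2]\) for \(B\wedge E\). Interpret \(x_{0,2}\) as the constant empty collection, \(n\) as \(0\) and \(m\) as \(1\). Then this bound is empty and the translated hypothesis for \(B\to C\) holds, while the translated conclusion is \(\top\wedge n=m\). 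So the extracted sequent is false.

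The paper's own \(\to\)L case asserts the same inhabitation claim, so following it does not close the gap. You need a stronger invariant. One option is hereditary inhabitation, assumed of the realizer variables and proved of extracted realizers and bounds, as in the strengthened invariant mentioned in the continuity appendix. The other is a restriction to nuclei all of whose information objects are inhabited.

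Two minor slips:
\begin{itemize}
  \item The affine \(\vee\)L splits the contexts, so there is no shared \(\Gamma\) to merge over \(x_1\). The bound for the disjunction is simply \(\eta(\pair{v}{v'})\).
  \item Your implication bound must be \((\lambda z.\,\eta(\pair{r_B}{z}))^\kappa(u_C[e\eqsubst x_{0,1}(r_B)])\), since \(\kappa\) needs an information-valued function.
\end{itemize}
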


\begin{proof}
  The proof is by induction on the given derivation in \(\NHAoa\).  The complete
  case analysis is given in Appendix~\ref{app:proofs}; here we indicate
  the structure of the argument and the two cases where the operations from
  Section~\ref{sec:interface} are essential.

  The logical rules are handled by the corresponding clauses of the formula translation (Definition~\ref{def:formula-translation}).
  Monotonicity in realizers (Lemma~\ref{lem:formula-monotonicity}) allows us to weaken realizing information, replacing a realizer by a less precise one.
  The lifted Kleisli extension (Definition~\ref{def:lke}) is used to pass from exact inputs to information about inputs; its correctness is expressed by realizer-level propagation (Lemma~\ref{lem:lke-propagation}). The lifted case distinction (Definition~\ref{def:lifted-if}) and the lifted recursion (Definition~\ref{def:lifted-rec}) are used in the cases of disjunction left and closed induction, respectively, as follows.

  \medskip
  \noindent\emph{Disjunction left.}
  Consider the affine disjunction-left rule
  \[
    \inferrule*[right=\(\vee\mathrm{L}\)]
    {\Gamma,A \vdash C
      \\
      \Delta,B \vdash C}
    {\Gamma,\Delta,A\vee B \vdash C}.
  \]
  By the induction hypotheses for the two premises, we obtain realizers \(r:\ty_C^+\) and \(s:\ty_C^+\) for the two conclusions.
  Given a variable \(t:\ty_{A\vee B}^+\) that realizes the assumption \(A\vee B\) in the conclusion, write \(t \equiv \langle t_1,t_2,t_3\rangle\), where
  \(t_1:\Two^\nucleus\) is information about the chosen disjunct,
  \(t_2:\ty_A^+\), and \(t_3:\ty_B^+\).
  We omit the routine challenge bookkeeping and construct the realizer for the conclusion:
  \[
    q \eqdef \lif_C(t_1,r,s) : \ty_C^+ .
  \]
  It remains to verify that \(q\) realizes \(C\) against any challenge \(y:\ty_C^-\).
  The translation of the assumption \(A \vee B\), with realizer \(t\), unfolds as a disjunction.
  In the left case, we have \(\sft \compat t_1\) and the translated assumption for \(A\), exactly as needed to apply the induction hypothesis for the left premise; and we obtain \(|C|^r_y\).
  Since \(\sft \compat t_1\), correctness of lifted case distinction (Corollary~\ref{cor:lifted-if}) gives \(r \nule_C q\); monotonicity in realizers (Lemma~\ref{lem:formula-monotonicity}) then yields \(|C|^q_y\).
  The right case is analogous and gives \(|C|^q_y\) as well.

  \smallskip
  \noindent\emph{Closed induction.}
  Consider the closed induction rule
  \[
    \inferrule*[right=\(\mathrm{ind}\)]
    {\vdash A(0) \\
      \vdash \forall n^\N(A(n)\to A(\suc n))}
    {\vdash \forall n^\N A(n)}.
  \]
  By the induction hypotheses for the two premises, we obtain a realizer
  \(r_0:\ty_A^+\) for the base case and a realizer
  \(r_S:\N^\nucleus\to\ty^+_{A(n)\to A(\suc n)}\) for the step.
  Writing \(r_S(p)\equiv \langle r_S^+(p),r_S^-(p)\rangle\),
  its realizer-transforming component \(r_S^+(p)\) has type \(\ty_A^+\to\ty_A^+\).
  The realizer \(h\) for the conclusion \(\forall n^\N A(n)\) is defined using the lifted primitive recursion:
  \[
    h \eqdef
    \lrec_A\bigl(r_0,\lambda p^{\N^\nucleus}. r_S^+(p)\bigr)
    : \N^\nucleus \to \ty_A^+ .
  \]
  The verification is by closed induction in the target system on the exact number compatible with the input information.
  The full details are given in Appendix~\ref{app:proofs}.
\end{proof}

The inhabitedness condition on extracted challenge bounds is essential for soundness with general information nuclei, where information objects need not be inhabited automatically.
It is stated relative to the conclusion challenge because some rules build assumption bounds by propagating information contained in that challenge; when a challenge variable contains information objects, the proof assumes those components are inhabited at the point where the challenge is used.
Inhabitedness is also needed when a rule recovers assumptions with smaller challenge bounds from an assumption with a combined bound.
For example, in the conjunction-left case, to use a bound for \(A\wedge B\) as a bound for~\(A\), one must fill the unused \(B\)-component with some exact challenge compatible with its bound.
This is possible only when the bound for~\(B\) is inhabited.
The detailed proof in Appendix~\ref{app:proofs} keeps track of this relative inhabitedness requirement in each rule case.

Applying the affine soundness theorem to the exact-information nucleus of Example~\ref{ex:exact-information} gives an affine exact-information interpretation of \(\NHAoa\).
Applying it to the finite-candidate nucleus of Example~\ref{ex:finite-candidate-information} gives an affine finite-candidate interpretation.
These are the affine cores of the Dialectica and Herbrand interpretations, respectively.

Similarly, applying the affine soundness construction to the continuity nucleus
of Example~\ref{ex:continuity-information} gives, uniformly in the
oracle parameter, an affine functional interpretation whose extracted realizers
also carry pointwise-continuity information.  Appendix~\ref{app:continuity-fi}
spells out the nucleus and explains how the continuity information is read off from the extracted realizers.

\begin{remark}
The reason extracted realizers can carry additional information, beyond the exact values extracted via the ordinary exact-information interpretation, is that extraction is compositional in the proof.  The extracted terms are built from the realizer operations associated with the logical and arithmetic rules.  By choosing a nucleus whose information types and operations contain additional data, the same syntactic construction propagates that data together with the ordinary computational content.  The compatibility relation then states the correctness property of the additional components.
\end{remark}

\section{Contraction and full soundness}
\label{sec:contraction}

In the affine soundness theorem, different occurrences of assumptions are interpreted independently.
Thus, if a derivation uses two occurrences of the same formula \(A\), the interpretation may assign them two different challenge bounds \(u,v:(\ty_A^-)^\nucleus\).
This causes no difficulty in the affine system, where the two occurrences remain distinct.
In the full intuitionistic system, however, the contraction rule
\[
  \inferrule*[right=\(\mathrm{contr}\)]
  {\Gamma,A,A \vdash B}
  {\Gamma,A \vdash B}
\]
identifies these two occurrences.
To interpret it, the two bounds generated for the two copies of \(A\) must be replaced by a single bound for the remaining copy.

This section isolates the additional structure needed to interpret this step.
We call such data a \emph{contraction structure} on an information nucleus.
The basic notion of information nucleus describes how information is propagated through terms and formulas; a contraction structure adds the further data needed to account for the structural reuse of assumptions in derivations.
Once such a structure is available, the affine soundness theorem extends to full intuitionistic soundness by adding only the contraction case to the proof.

The contraction case of the soundness proof uses the target system only through the defining property of the contraction operation.
Thus the role of contraction is isolated in this formula-indexed property.
In general, however, verifying this property may itself require contraction in the target system; this is the case for the standard merging and selection examples below.
Therefore, in this section, we take the target system \(\sys\) to be \(\NHAo\) or an extension of it.

\subsection{Contraction structures}
\label{sec:contraction-structures}

We formulate contraction as a formula-indexed operation on challenge bounds.
Given a formula \(A\), the operation takes the two bounds produced for two uses
of \(A\) and returns one bound for a single use.  Its correctness condition says
that validity against the single bound is enough to recover validity against
both original bounds.

\begin{definition}[Contraction structure]
  \label{def:contraction-structure}
  Let \(((\text{-})^\nucleus,\compat,\eta,\kappa)\) be an information nucleus over the target system \(\sys\).
  A \emph{contraction operation} for a formula \(A\) of the source system \(\src\) is a term
  \[
    \ctr_A:
    \ty_A^+ \to
    (\ty_A^-)^\nucleus \to
    (\ty_A^-)^\nucleus \to
    (\ty_A^-)^\nucleus
  \]
  such that \(\sys\) proves
  \[
    \bigl( \forall z \compat \ctr_A(x,u,v)\, |A|^x_z \bigr)
    \, \to \,
    \bigl(
    \forall z \compat u\, |A|^x_z
    \wedge
    \forall z \compat v\, |A|^x_z
    \bigr)
  \]
  for all \(x:\ty_A^+\) and \(u,v:(\ty_A^-)^\nucleus\).
  The operation is also required to preserve inhabitedness of challenge bounds:
  \[
    \Inh_{\ty_A^-}(u)\wedge \Inh_{\ty_A^-}(v)
    \to
    \Inh_{\ty_A^-}(\ctr_A(x,u,v)).
  \]
  A \emph{contraction structure} on the information nucleus over \(\sys\), relative to \(\src\), is a choice of a contraction operation for every formula of \(\src\).
\end{definition}

The following examples demonstrate the two basic ways such operations arise.
The first is contraction by merging: challenge bounds are combined uniformly by
joins.  The second is contraction by selection: exact challenges cannot be
joined, so one of them is selected instead.

\begin{example}[Contraction by merging]
  \label{ex:merge-contraction}
  Suppose that the information nucleus has binary joins on information objects: for each type \(\sigma\), there is a term
  \[
    \sqcup_\sigma : \sigma^\nucleus\to\sigma^\nucleus\to\sigma^\nucleus
  \]
  such that
  \[
    a \nule_\sigma a\sqcup_\sigma b
    \qquad\text{and}\qquad
    b \nule_\sigma a\sqcup_\sigma b.
  \]
  Then every formula \(A\) has a contraction operation given by
  \[
    \ctr_A(x,u,v) \eqdef u\sqcup_{\ty_A^-} v .
  \]
  If \(u\) is inhabited, then \(u\sqcup v\) is inhabited by \(u\nule u\sqcup v\); hence the operation also preserves inhabitedness of challenge bounds.
  The defining property follows from bounded monotonicity, together with conjunction introduction in the target system.
  For the finite-candidate nucleus of Example~\ref{ex:finite-candidate-information}, \(\sqcup\) is finite
  union or concatenation of finite collections.
  This is the contraction mechanism used in both Herbrand and Diller--Nahm interpretations.
\end{example}

\begin{example}[Contraction by selection]
  \label{ex:selection-contraction}
  Consider the exact-information nucleus of Example~\ref{ex:exact-information} with source system \(\OHAo\) and target system \(\NHAo\).
  Because \(\sigma^\nucleus \equiv \sigma\) and \(x\compat a\) is \(x=a\), every bounded quantifier introduced by the translation collapses to substitution. Thus \(\forall z\compat u\, |A|^x_z\) is equivalent to \(|A|^x_u\) in \(\NHAo\).  Moreover, \(|A|^x_u\) is decidable: after unfolding observational equality in \(\OHAo\), the only primitive atomic formulas left are equalities at the base types, which are decidable.
  Therefore, for every formula \(A\) of \(\OHAo\) there is a Boolean term
  \[
    d_A : \ty_A^+ \to \ty_A^- \to \Two
  \]
  such that \(\NHAo\) proves, for all \(x:\ty_A^+\) and \(u:\ty_A^-\),
  \[
    d_A(x,u)=\sft \leftrightarrow |A|^x_u .
  \]
  Define the contraction operation by
  \[
    \ctr_A(x,u,v) \eqdef \sfif_{\ty_A^-}\bigl(d_A(x,u),v,u\bigr).
  \]
  Since information objects are exact objects in this nucleus, every challenge bound is inhabited by reflexivity; hence the inhabitedness requirement is automatic.
  We verify the defining property.
  Assume \(|A|^x_{\ctr_A(x,u,v)}\).
  If \(d_A(x,u)=\sft\), then \(|A|^x_u\) holds
  by the specification of \(d_A\), and \(\ctr_A(x,u,v)=v\), so the assumption
  gives \(|A|^x_v\).  If \(d_A(x,u)=\sff\), then \(\ctr_A(x,u,v)=u\), so the
  assumption gives \(|A|^x_u\). The specification of \(d_A\) then gives \(d_A(x,u)=\sft\), contradicting
  \(d_A(x,u)=\sff\); by ex falso, both conclusions follow.

  This is the usual Dialectica counterexample-selection mechanism: if the first
  challenge \(u\) is already satisfied, test the second challenge \(v\);
  otherwise select \(u\) itself as the counterexample.
\end{example}

\subsection{Full soundness}
\label{sec:full-soundness}

We can now extend affine soundness to source systems with contraction.
In this paper the relevant source systems are \(\NHAo\) and \(\OHAo\).
The statement is the same as Theorem~\ref{thm:affine-soundness}, except that
the source system has contraction and the information nucleus is equipped with
a contraction structure for formulas of the chosen source system.

\begin{theorem}[Full soundness]
  \label{thm:full-soundness}
  Let the source system \(\src\) be \(\NHAo\) or \(\OHAo\), and let the target system \(\sys\) be \(\NHAo\) or an extension of \(\NHAo\).
  Let \(((\text{-})^\nucleus,\compat,\eta,\kappa)\) be an information nucleus over
  \(\sys\), equipped with a contraction structure \(\ctr\) relative to \(\src\).  If
  \[
    \Gamma \vdash_{\src} A,
  \]
  where \(\Gamma \equiv C_1,\ldots,C_n\), then, with fresh variables
  \(x_i:\ty_{C_i}^+\) and \(y:\ty_A^-\), one can extract terms
  \[
    r:\ty_A^+
    \qquad\text{and}\qquad
    u_i:(\ty_{C_i}^-)^\nucleus
  \]
  such that the extracted challenge bounds \(u_i\) satisfy the same relative inhabitedness condition as in Theorem~\ref{thm:affine-soundness}, and
  \[
    \forall \vec z \compat \vec u\, |\Gamma|^{\vec x}_{\vec z}
    \;\vdash_{\sys}
    |A|^r_y .
  \]
  Moreover, the same variable condition as in Theorem~\ref{thm:affine-soundness}
  holds: in particular, the challenge variable \(y\) does not occur free in the
  extracted realizer \(r\).
\end{theorem}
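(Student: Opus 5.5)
The proof will again go by induction on the derivation \(\Gamma\vdash_{\src}A\), reusing the case analysis of Theorem~\ref{thm:affine-soundness} verbatim for every rule of \(\NHAoa\). The affine proof only uses the target system through derivations valid in \(\NHAoa\), so it remains valid in any \(\sys\) extending \(\NHAo\). Two points need attention. The first is the contraction rule, the only genuinely new rule of \(\NHAo\). The second, when \(\src=\OHAo\), is that the primitive atomic formulas and axioms differ from those of \(\NHAo\). For \(\OHAo\), equality at higher types is a defined formula, so its realizer and challenge types and its translation are computed by Definitions~\ref{def:positive-negative-types} and \ref{def:formula-translation} from the unfolded formula. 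The base-type equality axioms, decidability, and the defining equations of terms are closed or quantifier-free facts. They are realized by trivial realizers (or by \(\eta\) of the obvious witnesses) exactly as the arithmetic axioms are in the affine case. I would check that each arithmetic axiom of \(\OHAo\) is either an instance of an \(\NHAoa\) axiom or derivable in \(\NHAo\), so that its translation is provable in \(\sys\).

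For the contraction case, suppose \(\Gamma,A,A\vdash B\) yields, by the induction hypothesis with fresh realizer variables \(\vec x, x', x''\) for the context and the two copies of \(A\), a realizer \(r\) and bounds \(\vec u, u', u''\) such that
\(\forall \vec z\compat\vec u\,|\Gamma|^{\vec x}_{\vec z},\ \forall z\compat u'\,|A|^{x'}_z,\ \forall z\compat u''\,|A|^{x''}_z \vdash_\sys |B|^r_y\).
I will substitute one fresh variable \(x\) for both \(x'\) and \(x''\); this substitution instance of the derivation is still a derivation in \(\sys\). I then set the new realizer to \(r[x',x''\eqsubst x]\), keep the bounds \(\vec u\) (with the same substitution), and take as the bound for the remaining copy
\[
  \ctr_A\bigl(x,\,u'[x',x''\eqsubst x],\,u''[x',x''\eqsubst x]\bigr).
\]
From the single hypothesis \(\forall z\compat \ctr_A(x,\dots)\,|A|^x_z\), the defining property of Definition~\ref{def:contraction-structure} yields both bounded hypotheses. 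Cutting these against the substituted induction hypothesis gives \(|B|^{r}_y\). This step uses contraction in \(\sys\), since the context \(\forall\vec z\compat\vec u\,|\Gamma|\) and the new hypothesis are each used once, but the hypothesis feeds two premises; this is why \(\sys\supseteq\NHAo\) is required.

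The side conditions are then routine. The variable condition holds because \(\FV(r[x',x''\eqsubst x])\subseteq\FV(\Gamma,A,B)\cup\{\vec x,x\}\), so \(y\) still does not occur in the realizer, while the new bound may mention \(y\) only through \(u',u''\). For relative inhabitedness, if the conclusion challenge components are inhabited, the induction hypothesis gives \(\Inh(u')\) and \(\Inh(u'')\), and the inhabitedness clause of the contraction operation gives \(\Inh(\ctr_A(x,u',u''))\). Inhabitedness is a provable \(\sys\)-formula, so it survives the substitution.

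I expect the main obstacle to be bookkeeping rather than contraction itself. The affine rules must interact correctly with the full system. In particular, the induction rule of \(\src\) with contraction may be used in the derived form with context, and the closed induction case must still suffice. Since the paper takes the rule to be closed in both \(\NHAo\) and \(\OHAo\), with the contextual form derived via contraction, no new case arises and the affine treatment applies. The only thing I would check carefully is that the decidability and observational-equality axioms of \(\OHAo\) translate to \(\sys\)-provable formulas.
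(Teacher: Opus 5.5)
Your proposal is correct and follows the paper's proof. The rules shared with $\NHAoa$ are handled by the affine cases, and the contraction case assigns the single bound $\ctr_A(x,u,v)$, recovers both bounded hypotheses from its defining property, and gets inhabitedness from the preservation clause; your explicit substitution of one realizer variable for both copies is what the paper does implicitly by writing the same $x$ for both. One small correction: that case does not itself use contraction in $\sys$, because the defining property yields a single conjunction which $\wedge\mathrm{L}$ splits, and the paper requires $\sys\supseteq\NHAo$ instead because verifying the contraction property itself (e.g.\ for merging and selection) generally needs contraction.
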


\begin{proof}
  The proof is almost the same as the affine soundness proof, with one additional case for
  contraction.  Suppose the last applied rule is
  \[
    \inferrule*[right=\(\mathrm{contr}\)]
    {\Gamma,A,A \vdash B}
    {\Gamma,A \vdash B}.
  \]
  The induction hypothesis for the premise gives two challenge bounds for the
  two occurrences of \(A\), say \(u,v:(\ty_A^-)^\nucleus\), together with a
  derivation of the translated conclusion from assumptions containing
  \[
    \forall z\compat u\, |A|^x_z
    \qquad\text{and}\qquad
    \forall z\compat v\, |A|^x_z .
  \]
  In the conclusion, the single occurrence of \(A\) is assigned the bound
  \(\ctr_A(x,u,v)\).
  This bound is inhabited by the inhabitedness preservation property of \(\ctr_A\), since \(u\) and \(v\) are inhabited by the induction hypothesis.
  The defining property of the contraction operation gives
  \[
    \bigl( \forall z\compat \ctr_A(x,u,v)\, |A|^x_z \bigr)
    \to
    \bigl(
    \forall z\compat u\, |A|^x_z
    \wedge
    \forall z\compat v\, |A|^x_z
    \bigr),
  \]
  so the two assumptions required by the induction hypothesis are recovered
  from the single contracted assumption.

  For \(\OHAo\), equality at product and function types is unfolded according to the observational clauses; the equality cases are discussed in Appendix~\ref{app:proofs}.
\end{proof}

Thus the additional proof-theoretic content needed to pass from affine to full soundness is isolated in the contraction structure.
Therefore, any information nucleus equipped with a contraction structure for the chosen source system yields a functional interpretation of that source system.
For instance, the finite-candidate nucleus of Example~\ref{ex:finite-candidate-information}, together with the merging operation of Example~\ref{ex:merge-contraction}, gives the Herbrand interpretation over \(\NHAo\); and the exact-information nucleus of Example~\ref{ex:exact-information}, together with the selection operation of Example~\ref{ex:selection-contraction}, gives the usual Dialectica interpretation over \(\OHAo\).

\section{Related work}
\label{sec:related}

The closest surrounding literature is the programme of unifying, uniform, and parametrized functional interpretations developed by Oliva and collaborators \cite{Oliva2006Unifying,Oliva2014PastFuture,DinisOliva2021Parametrised,Oliva2025Uniform,FerreiraOliva2025InformativeTypes}.
These frameworks recover many known interpretations by suitable choices of parameters, including Gödel's Dialectica interpretation, the Diller--Nahm variant, modified realizability, monotone and bounded interpretations, and Herbrandized variants.
Our aim is different: instead of seeking maximal coverage, we isolate a minimal algebraic interface sufficient for affine soundness, and then identify the extra structure needed for contraction.

Our framework is particularly close to the unifying functional interpretation of Oliva~\cite{Oliva2006Unifying}.
There, the bounded-quantifier abbreviation is required to satisfy three conditions \(B_1\)--\(B_3\) \cite[Definition~12]{Oliva2006Unifying}.
In our terminology, \(B_1\) plays the role of the unit operation \(\eta\), \(B_3\) plays the role of propagation through \(\kappa\), and \(B_2\) plays the role of contraction, since it combines two bounds into one.
Thus Oliva's formulation packages these requirements together at the level of bounded formulas, whereas our interface separates the formula-independent affine data \(\eta,\kappa\) from the formula-indexed contraction structure.

Another close relation is to the parametrized functional interpretation of Dinis and Oliva~\cite{DinisOliva2021Parametrised}.
It allows two type parameters, one for witnesses and one for challenge bounds, together with the corresponding compatibility relations and restriction predicates.
Our setting is more restrictive in two ways.
First, we use the same information type and compatibility relation for both witnesses and challenge bounds; this excludes interpretations such as modified realizability and the Diller--Nahm interpretation, which require these roles to be treated differently.
Second, we require the propagation data to be given by internal algebraic operations \(\eta\) and \(\kappa\); this excludes bounded functional interpretations, where the corresponding operations are defined externally.
The benefit of these restrictions is a smaller algebraic interface, well suited for propagating additional data through extracted realizers, as illustrated by the continuity example.

A further point of contact is the analysis of functional interpretations through linear logic~\cite{Oliva2008LinearDialectica,Oliva2010LinearIntuitionistic,FerreiraOliva2011ILL}.
There, structural behaviour is controlled by the exponential modality: the exponential-free fragment is treated uniformly, while different interpretations arise from different treatments of \(!A\), where weakening and contraction become available.
This is close in spirit to our separation between affine soundness and contraction.
The difference is that we work directly with affine finite-type arithmetic rather than through a linear-logic embedding: information nuclei account for affine propagation, while contraction structures account for the reuse of assumptions.

The information nuclei used here are also related to the parametrized translation of System~\(\T\)~\cite{Xu2020Gentzen}.
There, a nucleus controls a term translation: programs are translated so as to carry additional structure.
In the present paper the same idea is lifted from terms to formulas and proofs.
The nucleus now determines the information types, the compatibility relation used in bounded quantifiers, and the propagation operations used in the soundness proof.
Continuity nuclei, for instance, show that extracted realizers can carry continuity information in addition to their ordinary computational content.

\section{Conclusion}
\label{sec:conclusion}

We have shown that functional interpretations can be organized around two separate components: information nuclei for affine information propagation, and contraction structures for the reuse of assumptions.
The resulting framework is not intended to cover all known interpretations, but to provide a small interface in which these structural differences become visible.
It also suggests a route to functional interpretations in which extracted realizers are enriched with auxiliary data, by choosing information nuclei that propagate such data through the interpretation, as illustrated by the continuity example.

One natural continuation is to add extensionality to the account.
This paper deliberately separates it from affine information propagation and contraction: the standard Dialectica instance is recovered over the observational system \(\OHAo\), while weak and full extensionality are left untreated.
A satisfactory extension should identify what additional structure, beyond an information nucleus and a contraction structure, is needed to interpret extensionality rules.
This may require separating ordinary quantification from the computational information carried by realizers, along the lines of uniform Heyting arithmetic~\cite{Berger2004,RatiuSchwichtenberg2010Decorating} or nonstandard Heyting arithmetic~\cite{BBS2012}.

A second line of investigation is how information nuclei can be combined while preserving the compatibility relations and propagation laws needed for soundness.
Such combinations could provide a new way of combining multiple functional interpretations while also combining the auxiliary data carried by their extracted realizers, in the spirit of hybrid functional interpretations~\cite{HernestOliva2008Hybrid,Oliva2012HybridLinearIntuitionistic}.


\bibliography{refs}

\appendix

\section{Finite-type arithmetic systems}
\label{app:systems}

This appendix gives the formal details of the systems used in the paper.
All systems share the finite-type term language described in Section~\ref{sec:term-language}.
We first list the components from which the systems are built: the term equations, the affine intuitionistic rules, the contraction rule, the two treatments of equality, and the arithmetic principles.
The systems \(\NHAoa\), \(\NHAo\), and \(\OHAo\) are then defined by specifying which of these components they contain.

\paragraph*{Defining equations for terms}
The systems include the usual defining equations for the product projections,
Boolean case distinction, and primitive recursion as equational axiom schemas in
the object-language equality of the appropriate type.
We freely use these axioms as conversion rules in informal derivations:
\[
  \begin{array}{ll}
    \pr_1(\pair{a}{b}) =_\sigma a,
    \qquad &
    \pr_2(\pair{a}{b}) =_\tau b,
    \qquad\quad
    \pair{\pr_1(w)}{\pr_2(w)} =_{\sigma \times \tau} w,
    \\[.3em]
    \sfif_\sigma(\sft,a,b) =_\sigma a,
    \qquad &
    \sfif_\sigma(\sff,a,b) =_\sigma b,
    \\[.3em]
    \rec_\sigma(a,f,0) =_\sigma a,
    \qquad &
    \rec_\sigma(a,f,\suc n) =_\sigma f(n,\rec_\sigma(a,f,n)).
  \end{array}
\]
We freely use the \(\beta\)-conversion for lambda abstraction at the meta-level when writing terms.

\paragraph*{Sequents}
We formulate the systems in this appendix as single-conclusion sequent calculi.
Sequents have the form \(\Gamma \vdash A\), where \(\Gamma\) is a finite multiset of
formulas and \(A\) is a formula.  We write \(\Gamma,\Delta\) for multiset union,
and we allow the empty multiset as a context.  Each occurrence of a formula in
\(\Gamma\) is counted separately. Since contexts are finite multisets, exchange
is built into the representation of contexts rather than included as a rule.

\paragraph*{Affine intuitionistic rules}
Figure~\ref{fig:affine-rules} displays an affine version of the usual
single-conclusion intuitionistic rules.  The affine discipline concerns the use
of assumptions: an occurrence of an assumption may be used at most once.  Thus
the rules split contexts between premises whenever two premises are combined.
Weakening is allowed, but contraction is not.

\begin{figure}[htbp]
  \centering
  \(
  \begin{array}{c}
    \inferrule*[right=\(\mathrm{id}\)]
    { }
    {A \vdash A}
    \qquad
    \inferrule*[right=\(\mathrm{wkn}\)]
    {\Gamma \vdash B}
    {\Gamma,A \vdash B}
    \qquad
    \inferrule*[right=\(\mathrm{cut}\)]
    {\Gamma \vdash A \\ \Delta, A \vdash B}
    {\Gamma, \Delta \vdash B}
    \qquad
    \inferrule*[right=\(\mathrm{efq}\)]
    { }
    {\bot \vdash A}
    \\[.5em]

    \inferrule*[right=\(\wedge\mathrm{L}\)]
    {\Gamma,A,B \vdash C}
    {\Gamma,A\wedge B \vdash C}
    \qquad
    \inferrule*[right=\(\wedge\mathrm{R}\)]
    {\Gamma \vdash A \\ \Delta \vdash B}
    {\Gamma,\Delta \vdash A\wedge B}
    \\[.5em]

    \inferrule*[right=\(\vee\mathrm{L}\)]
    {\Gamma,A \vdash C \\ \Delta,B \vdash C}
    {\Gamma,\, \Delta,\, A \vee B \vdash C}
    \qquad
    \inferrule*[right=\(\vee\mathrm{R}_l\)]
    {\Gamma \vdash A}
    {\Gamma \vdash A\vee B}
    \qquad
    \inferrule*[right=\(\vee\mathrm{R}_r\)]
    {\Gamma \vdash B}
    {\Gamma \vdash A\vee B}
    \\[.5em]

    \inferrule*[right=\(\mathord{\to}\mathrm{L}\)]
    {\Gamma \vdash A \\ \Delta, B \vdash C}
    {\Gamma,\, \Delta,\, A \to B \vdash C}
    \qquad
    \inferrule*[right=\(\mathord{\to}\mathrm{R}\)]
    {\Gamma, A \vdash B}
    {\Gamma \vdash A \to B}
    \\[.5em]

    \inferrule*[right=\(\forall\mathrm{L}\)]
    {\Gamma, A(t) \vdash B}
    {\Gamma,\, \forall x^\sigma A(x) \vdash B}
    \qquad
    \inferrule*[right=\(\forall\mathrm{R}\)\quad\(x \notin \FV(\Gamma)\)]
    {\Gamma \vdash A(x)}
    {\Gamma \vdash \forall x^\sigma A(x)}
    \\[.5em]

    \inferrule*[right=\(\exists\mathrm{L}\)\quad\(x \notin \FV(\Gamma\text{,}\,B)\)]
    {\Gamma, A(x) \vdash B}
    {\Gamma, \exists x^\sigma A(x) \vdash B}
    \qquad
    \inferrule*[right=\(\exists\mathrm{R}\)]
    {\Gamma \vdash A(t)}
    {\Gamma \vdash \exists x^\sigma A(x)}
  \end{array}
  \)
  \caption{Affine structural and intuitionistic logical rules.}
  \label{fig:affine-rules}
\end{figure}

\paragraph*{Contraction}
The contraction rule is
\[
  \inferrule*[right=\(\mathrm{contr}\)]
  {\Gamma,A,A \vdash B}
  {\Gamma,A \vdash B}.
\]

\paragraph*{Neutral equality}
For the neutral systems \(\NHAoa\) and \(\NHAo\), equality \(=_\sigma\) is primitive at every finite type \(\sigma\), and is governed by the equivalence-relation axioms
\[
  s =_\sigma s,
  \qquad
  s =_\sigma t \to t =_\sigma s,
  \qquad
  s =_\sigma t \wedge t =_\sigma u \to s =_\sigma u.
\]
In addition, equality is preserved by term contexts: for every term \(r:\tau\)
and variable \(x:\sigma\),
\[
  s =_\sigma t \to r[x \eqsubst s] =_\tau r[x \eqsubst t].
\]

\paragraph*{Observational equality}
In \(\OHAo\), only equality at the base types \(\N\), \(\Two\), and \(\One\) is primitive.
Equality at product and function types is defined recursively by
\[
  x =_{\sigma\times\tau} y \eqdef
  \pr_1(x) =_\sigma \pr_1(y) \wedge \pr_2(x) =_\tau \pr_2(y),
  \qquad
  f =_{\sigma\to\tau} g \eqdef
  \forall x^\sigma\, f(x) =_\tau g(x).
\]
The equivalence-relation laws for these observational equality predicates are proved by induction on types from the corresponding laws for the primitive base-type equalities.
No additional congruence or extensionality rule for arbitrary term contexts is included.

\paragraph*{Arithmetic principles}
The arithmetic principles are stated using the equality predicates of the system under consideration: neutral equality for \(\NHAoa\) and \(\NHAo\), and observational equality for \(\OHAo\).
They consist of the arithmetic axioms together with the closed induction rule.
The arithmetic axioms comprise the defining equations for term constants listed above and the separation and injectivity axioms for the constructors of the base types,
\[
  \sft =_\Two \sff \to \bot,
  \qquad
  \suc x =_\N 0 \to \bot,
  \qquad
  \suc x =_\N \suc y \to x =_\N y.
\]
We assume the usual decidability of equality at the base types.

Induction is included, for all formulas, as the closed rule
\[
  \inferrule*[right=\(\mathrm{ind}\)]
  {\vdash A(0) \\
   \vdash \forall n^\N(A(n)\to A(\suc n))}
  {\vdash \forall n^\N A(n)}.
\]
The closed formulation ensures that induction itself does not introduce reuse of assumptions into the affine system.  In systems with contraction, the usual induction axiom, equivalently the induction rule with arbitrary context, is derivable from this closed rule.

\paragraph*{The systems}
The three intuitionistic arithmetic systems in all finite types used in the paper are obtained by assembling the preceding components as follows:
\begin{itemize}
  \item \(\NHAoa\) consists of the affine intuitionistic rules of Figure~\ref{fig:affine-rules}, neutral equality, and the arithmetic principles listed above.
  \item \(\NHAo\) is obtained from \(\NHAoa\) by adding the contraction rule.
  \item \(\OHAo\) consists of the affine intuitionistic rules of Figure~\ref{fig:affine-rules} together with contraction, observational equality, and the arithmetic principles listed above, with equality in those principles read observationally.
\end{itemize}

\section{Soundness proof details}
\label{app:proofs}

This appendix gives the proof details for the affine soundness theorem
(Theorem~\ref{thm:affine-soundness}) and the full soundness theorem
(Theorem~\ref{thm:full-soundness}).  The affine soundness theorem is proved by
induction on \(\NHAoa\)-derivations.  We group the cases into affine structural
rules, affine logical rules, equality axioms and defining equations, and the
closed induction rule.  The full soundness theorem is then obtained by adding the
single contraction case.

\paragraph*{Auxiliary lemmas}
We first record two elementary lemmas that will be used repeatedly in the soundness proof.
The first lemma provides canonical information objects and realizers.

\begin{lemma}[Canonical information and realizers]
  \label{lem:canonical}
  For every finite type \(\sigma\), there is a closed term
  \(\cc_\sigma:\sigma^\nucleus\);
  consequently, for every formula \(A\), there is a closed term
  \(\cc_A:\ty_A^+\).
  In particular, the canonical information objects are inhabited.
\end{lemma}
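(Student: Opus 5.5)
The plan is to build the canonical objects from closed terms of the base types and push them through the nucleus with \(\eta\).

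\textbf{Canonical exact objects.} First, for every finite type \(\sigma\), define a closed term \(o_\sigma:\sigma\) by recursion on \(\sigma\):
\[
  o_\N \eqdef 0,\qquad o_\Two \eqdef \sft,\qquad o_\One \eqdef \star,\qquad o_{\sigma\times\tau}\eqdef\pair{o_\sigma}{o_\tau},\qquad o_{\sigma\to\tau}\eqdef \lambda x^\sigma.\,o_\tau .
\]
Then set \(\cc_\sigma \eqdef \eta_\sigma(o_\sigma):\sigma^\nucleus\). Since \(\eta_\sigma\) is a term of the nucleus and \(o_\sigma\) is closed, \(\cc_\sigma\) is closed. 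This presupposes that \(\eta_\sigma\) itself has no free variables; I would take that as the implicit convention on the nucleus data.

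\textbf{Inhabitedness.} This is immediate from (IN1): \(o_\sigma \compat_\sigma \eta_\sigma(o_\sigma)\). Hence \(\Inh_\sigma(\cc_\sigma)\) is provable in \(\sys\), with witness \(o_\sigma\). No contraction is used, so the argument works for \(\sys\) an extension of \(\NHAoa\).

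\textbf{Canonical realizers.} Next define \(\cc_A:\ty_A^+\) by recursion on \(A\), following Definition~\ref{def:positive-negative-types}:
\[
  \begin{array}{rclcrcl}
    \cc_P & \eqdef & \star, & \qquad & \cc_{B\wedge C} & \eqdef & \pair{\cc_B}{\cc_C},\\[0.3em]
    \cc_{B\vee C} & \eqdef & \langle \cc_\Two,\cc_B,\cc_C\rangle, & & \cc_{B\to C} & \eqdef & \pair{\lambda r.\,\cc_C}{\lambda w.\,\cc_{\ty_B^-}},\\[0.3em]
    \cc_{\forall x^\sigma B} & \eqdef & \lambda a.\,\cc_B, & & \cc_{\exists x^\sigma B} & \eqdef & \pair{\cc_\sigma}{\cc_B}.
  \end{array}
\]
In the implication clause, \(\cc_{\ty_B^-}\) denotes the canonical information object at the finite type \(\ty_B^-\), given by the first part of the lemma. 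Closedness follows by a trivial induction on \(A\).

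\textbf{Main subtlety.} There is no real obstacle. The only point to watch is that the recursion for \(\cc_A\) calls \(\cc_\tau\) at the types \(\tau=\ty_B^-\), \(\Two\) and \(\sigma\). Since \(\cc_\tau\) was already defined uniformly for every finite type, the two recursions are not intertwined and there is no circularity.

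Alternatively, one could get \(\cc_A\) directly as the canonical exact object \(o_{\ty_A^+}\), since \(\ty_A^+\) is itself a finite type. The structured definition above is preferable only because it is the one that will be convenient when these realizers are later used in the soundness proof.
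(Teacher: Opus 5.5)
Your proposal is correct and follows the paper's proof essentially line for line. The paper likewise picks closed default terms \(d_\sigma\) by recursion on types, sets \(\cc_\sigma \eqdef \eta_\sigma(d_\sigma)\), gets inhabitedness from (IN1), and defines \(\cc_A\) by exactly the clauses you give, including \(\cc_{\ty_B^-}\) in the implication case.
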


\begin{proof}
  First choose, by recursion on finite types, a closed term
  \(d_\sigma:\sigma\) of each finite type: for instance
  \(d_\N\eqdef 0\), \(d_\Two\eqdef\sft\), \(d_\One\eqdef\star\),
  \(d_{\sigma\times\tau}\eqdef\pair{d_\sigma}{d_\tau}\), and
  \(d_{\sigma\to\tau}\eqdef\lambda x^\sigma.\,d_\tau\).  Define
  \(\cc_\sigma\eqdef\eta_\sigma(d_\sigma)\).  Then
  \(d_\sigma\compat\cc_\sigma\) follows from (IN1), hence
  \(\cc_\sigma\) is inhabited.

  The canonical realizer \(\cc_A:\ty_A^+\) is defined by recursion on \(A\):
  \[
    \begin{array}{rcl}
      \cc_P                    & \eqdef & \star,                               \\[.2em]
      \cc_{B\wedge C}          & \eqdef & \pair{\cc_B}{\cc_C},                 \\[.2em]
      \cc_{B\vee C}            & \eqdef & \langle \cc_\Two,\cc_B,\cc_C\rangle, \\[.2em]
      \cc_{B\to C}             & \eqdef &
      \left\langle
      \lambda x^{\ty_B^+}.\,\cc_C,
      \lambda w^{\ty_B^+\times\ty_C^-}.\,\cc_{\ty_B^-}
      \right\rangle,                                                           \\[.2em]
      \cc_{\forall x^\sigma B} & \eqdef &
      \lambda a^{\sigma^\nucleus}.\,\cc_B,                                     \\[.2em]
      \cc_{\exists x^\sigma B} & \eqdef &
      \pair{\cc_\sigma}{\cc_B}.
    \end{array}
  \]
\end{proof}

The second lemma uniformizes challenge bounds: if the bounds for a context depend on a challenge variable, then \(\kappa\) turns them into bounds that are valid uniformly for all challenges covered by a given information object.

\begin{lemma}[Challenge bound uniformization]
  \label{lem:uniform-bound}
  Suppose that, for each \(a:\rho\), we have bounds
  \(u_i(a):(\ty^-_{C_i})^\nucleus\) such that
  \[
    \forall \vec z\compat \vec u(a)\,|\Gamma|^{\vec x}_{\vec z}
    \vdash
    D(a).
  \]
  Given \(w:\rho^\nucleus\), define for each \(C_i \in \Gamma\)
  \[
    U_i \eqdef (\lambda a^\rho.\,u_i(a))^\kappa(w) : (\ty^-_{C_i})^\nucleus.
  \]
  Then
  \[
    \forall \vec z\compat \vec U\,|\Gamma|^{\vec x}_{\vec z}
    \vdash
    \forall a\compat w\,D(a).
  \]
  Moreover, if \(w\) is inhabited and each \(u_i(a)\) is inhabited whenever \(a\compat w\), then each \(U_i\) is inhabited.
\end{lemma}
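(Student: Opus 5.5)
The plan is to derive, for an arbitrary fixed \(a\) with \(a\compat w\), the bounded assumptions at \(\vec u(a)\) from the uniformized assumptions at \(\vec U\), and then apply the hypothesis. Universal introduction on \(a\) then gives \(\forall a\compat w\,D(a)\). The key tool is (IN2), applied with \(f\eqdef\lambda a^\rho.\,u_i(a):\rho\to(\ty^-_{C_i})^\nucleus\) and the information object \(w:\rho^\nucleus\). From \(a\compat_\rho w\), (IN2) yields \(u_i(a)\nule U_i\), that is, \(\forall z\,(z\compat u_i(a)\to z\compat U_i)\).

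By bounded monotonicity (Lemma~\ref{lem:bounded-monotonicity}), each assumption \(\forall z_i\compat U_i\,|C_i|^{x_i}_{z_i}\) then yields \(\forall z_i\compat u_i(a)\,|C_i|^{x_i}_{z_i}\). The argument therefore runs as follows:
\begin{enumerate}
\item Fix \(a\) with \(a\compat w\).
\item Derive \(u_i(a)\nule U_i\) for each \(i\) by (IN2).
\item Weaken each uniform assumption to the \(u_i(a)\)-bounded one.
\item Cut against the given derivation of \(D(a)\).
\item Discharge \(a\compat w\) and generalize over \(a\).
\end{enumerate}

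The main point of care is the affine discipline when \(\sys\) is just \(\NHAoa\). The hypothesis \(a\compat w\) is used once for each \(i\), and \(n\) times in total, which looks like contraction. Two remedies are available. The first is to note that \(a\compat w\) is a formula we introduce ourselves under \(\to\mathrm{R}\), so its multiple uses can be packaged by deriving the single \(\sys\)-formula \(\forall a\,(a\compat w\to \bigwedge_i u_i(a)\nule U_i)\) directly from (IN2) instances. The second, used if that packaging itself needs reuse, is to merge the \(n\) bounds into one \(\kappa\)-application on a tuple type \(\prod_i(\ty^-_{C_i})^\nucleus\). I expect the intended reading is that the lemma's derivations are taken in \(\sys\) with the usual reuse of a locally introduced hypothesis, and I would state the argument at that level, noting the tuple remedy if needed. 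One further check is that the assumption contexts are used exactly once each, which holds because each translated assumption is transformed one-to-one by a monotonicity step.

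For the inhabitedness claim, assume \(\Inh(w)\), giving some \(a\) with \(a\compat w\). By hypothesis, \(u_i(a)\) is inhabited, giving \(z\compat u_i(a)\). By (IN2), \(u_i(a)\nule U_i\), so \(z\compat U_i\), hence \(\Inh(U_i)\). This is a routine existential elimination and needs no further lemmas.
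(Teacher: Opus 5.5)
Your proposal is correct and follows the paper's proof essentially step for step. You fix \(a\compat w\), get \(u_i(a)\nule U_i\) from (IN2) applied to \(\lambda a.\,u_i(a)\), and use bounded monotonicity to turn the \(U_i\)-bounded assumptions into \(u_i(a)\)-bounded ones; you then apply the hypothesis and generalize, and inhabitedness is transferred along the same \(\nule\)-step. Your affine-discipline aside goes beyond the paper, which simply reuses local hypotheses such as \(a\compat w\) in target derivations, as it also does in the implication-left case. Your first remedy would itself need that reuse to assemble the conjunction, and your second changes the definition of \(U_i\), so the reading you settle on is the right one.
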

\begin{proof}
  Assume \(\forall \vec z\compat \vec U\,|\Gamma|^{\vec x}_{\vec z}\), and let
  \(a\compat w\).  For each \(i\), \textup{(IN2)} applied to
  \(\lambda a.\,u_i(a)\) gives
  \[
    u_i(a)\nule U_i .
  \]
  Hence bounded monotonicity turns the assumptions bounded by \(U_i\) into the
  assumptions bounded by \(u_i(a)\).  The displayed premise then yields
  \(D(a)\).  Since \(a\compat w\) was arbitrary, we obtain
  \(\forall a\compat w\,D(a)\).

  For inhabitedness, choose \(a\) with \(a\compat w\).  By assumption,
  \(u_i(a)\) is inhabited.  Since \(u_i(a) \nule U_i\), inhabitedness transfers to
  \(U_i\).
\end{proof}

We will need a derived pairing operation on information objects, for example in the proof of the conjunction-left case.

\begin{definition}[Information pairing operation]
For \(a:\sigma^\nucleus\) and \(b:\tau^\nucleus\), define
\[
  a\otimes b
  \eqdef
  \bigl(\lambda x^\sigma.\,
    (\lambda y^\tau.\,\eta_{\sigma\times\tau}(\pair{x}{y}))^\kappa(b)
  \bigr)^\kappa(a)
  :(\sigma\times\tau)^\nucleus .
\]
\end{definition}

\begin{lemma}[Pairing information]
  \label{lem:pairing-information}
  If \(x\compat a\) and \(y\compat b\), then
  \[
    \pair{x}{y}\compat a\otimes b .
  \]
  Consequently, if \(a\) and \(b\) are inhabited, then \(a\otimes b\) is inhabited.
\end{lemma}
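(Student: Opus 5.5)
The plan is to apply (IN2) twice, once for each layer of \(\kappa\) in the definition of \(a\otimes b\), and then finish with (IN1). Fix \(x\compat_\sigma a\) and \(y\compat_\tau b\). Write
\[
  g_x \eqdef \lambda y'^\tau.\,\eta_{\sigma\times\tau}(\pair{x}{y'}) : \tau\to(\sigma\times\tau)^\nucleus
\]
and
\[
  h \eqdef \lambda x'^\sigma.\,g_{x'}^\kappa(b) : \sigma\to(\sigma\times\tau)^\nucleus .
\]
By definition \(a\otimes b \equiv h^\kappa(a)\), and \(h(x)\) is \(\beta\)-equal to \(g_x^\kappa(b)\).

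The chain of inclusions runs as follows.

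1. By (IN1) at type \(\sigma\times\tau\), we have \(\pair{x}{y}\compat \eta(\pair{x}{y})\), and \(\eta(\pair{x}{y})\) is \(g_x(y)\) up to \(\beta\).
2. Applying (IN2) to \(g_x\) with \(y\compat b\) gives \(g_x(y)\nule g_x^\kappa(b)\). Unfolding \(\nule\), we get \(\pair{x}{y}\compat g_x^\kappa(b)\), which is \(h(x)\).
3. Applying (IN2) to \(h\) with \(x\compat a\) gives \(h(x)\nule h^\kappa(a)\equiv a\otimes b\). Instantiating this at \(\pair{x}{y}\) yields \(\pair{x}{y}\compat a\otimes b\).

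For the consequence, we assume \(\Inh(a)\) and \(\Inh(b)\) and eliminate both existentials to obtain \(x\compat a\) and \(y\compat b\). The first part then gives \(\pair{x}{y}\compat a\otimes b\), and we introduce \(\exists\) to conclude \(\Inh(a\otimes b)\).

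The only delicate point is the silent use of \(\beta\)-conversion inside the compatibility formulas, for example identifying \(h(x)\) with \(g_x^\kappa(b)\). The paper fixes meta-level \(\beta\)-conversion as a convention, so I treat these terms as identical.

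We also need the target system to handle the shared use of \(x\) and \(y\). This requires no contraction of assumptions: each hypothesis is consumed exactly once in the chain above, so the argument stays within \(\NHAoa\). The \(\exists\)-eliminations are likewise affine.
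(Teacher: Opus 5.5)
Your proof is correct and matches the paper's argument: (IN1) gives \(\pair{x}{y}\compat\eta(\pair{x}{y})\), then (IN2) is applied once for the inner Kleisli extension (using \(y\compat b\)) and once for the outer one (using \(x\compat a\)), and inhabitedness follows by taking compatible witnesses. The only difference is that the paper combines the two \(\nule\)-steps by transitivity, whereas you unfold \(\nule\) and instantiate at \(\pair{x}{y}\) after each step, which amounts to the same thing.
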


\begin{proof}
  By (IN1), \(\pair{x}{y}\compat\eta(\pair{x}{y})\).  From
  \(y\compat b\), (IN2) applied to
  \(\lambda y.\eta(\pair{x}{y})\) gives
  \[
    \eta(\pair{x}{y})
    \nule
    (\lambda y.\eta(\pair{x}{y}))^\kappa(b).
  \]
  From \(x\compat a\), (IN2) applied to
  \(\lambda x.(\lambda y.\eta(\pair{x}{y}))^\kappa(b)\) gives
  \[
    (\lambda y.\eta(\pair{x}{y}))^\kappa(b)
    \nule
    a\otimes b .
  \]
  Hence \(\pair{x}{y}\compat a\otimes b\) by transitivity of \(\nule\).
  The inhabitedness statement follows by taking compatible witnesses for \(a\)
  and \(b\).
\end{proof}

\paragraph*{Affine structural rules}
We consider three affine structural rules listed in Figure~\ref{fig:affine-rules}.

\medskip
\noindent
\textbf{Identity.}
The identity rule is realized by the realizer variable for the unique assumption.
For \(A\vdash A\), let \(x:\ty_A^+\) be the realizer variable for the assumption \(A\) and let \(y:\ty_A^-\) be the challenge variable for the conclusion \(A\). The realizer for the conclusion is \(x\), and the challenge bound for the assumption is \(\eta(y)\). By (IN1), the bound \(\eta(y)\) is inhabited and the translated assumption \(\forall z \compat \eta(y) |A|^x_z \) gives \(|A|^x_y\).

\medskip
\noindent
\textbf{Weakening.}
Suppose the last rule is weakening
\[
  \inferrule*[right=\(\mathrm{wkn}\)]
  {\Gamma\vdash B}
  {\Gamma,A\vdash B}.
\]
If the induction hypothesis realizes \(\Gamma\vdash B\), then for the weakened sequent \(\Gamma,A\vdash B\) we keep the same realizer for \(B\) and the same challenge bounds for the assumptions in \(\Gamma\).
The unused assumption \(A\) is assigned the canonical inhabited challenge bound \(\cc\) from Lemma~\ref{lem:canonical}.

\medskip
\noindent
\textbf{Cut.}
Suppose the last rule is cut
\[
  \inferrule*[right=\(\mathrm{cut}\)]
  {\Gamma\vdash A \\ \Delta,A\vdash B}
  {\Gamma,\Delta\vdash B}.
\]
The induction hypothesis for \(\Gamma \vdash A\) gives a realizer \(r:\ty_A^+\) for \(A\) and challenge bounds \(u_i:(\ty^-_{C_i})^\nucleus\) for the assumptions \(C_i \in \Gamma\), where \(r\) depends only on the realizer variables \(x_i : \ty_{C_i}^+\) for the assumptions \(\Gamma\) and \(u_i\) depends on the same realizer variables as well as the challenge variable \(a:\ty_A^-\) to \(A\), such that
\[
  \forall \vec z\compat \vec u\,|\Gamma|^{\vec x}_{\vec z}
  \vdash |A|^r_a .
\]
The induction hypothesis for \(\Delta,A\vdash B\) gives a realizer \(s:\ty_B^+\) for \(B\), and challenge bounds \(v_j:(\ty^-_{D_j})^\nucleus\) for the assumptions \(D_j \in \Delta\) and \(w:(\ty^-_A)^\nucleus\) for the distinguished assumption~\(A\), where \(s\) depends only on the realizer variables \(y_j : \ty_{D_j}^+\) for \(\Delta\) and \(e:\ty^+_A\) for \(A\), and \(v_j\) and \(w\) depend on the same realizer variables as well as the challenge variable \(b:\ty_B^-\) to \(B\), such that
\[
  \forall \vec{z} \compat \vec{v} \, |\Delta|^{\vec y}_{\vec z} ,\,
  \forall z \compat w \, |A|^{e}_z
  \vdash |B|^s_b .
\]
Note that \(A\) is eliminated in the conclusion sequent and thus the extracted realizer and challenge bounds cannot depend on the variables \(e\) or \(a\) for \(A\). We define
\[
  t \eqdef s[e\eqsubst r] : \ty_B^+,
  \qquad
  w' \eqdef w[e\eqsubst r] : (\ty^-_A)^\nucleus,
  \qquad
  V_j \eqdef v_j[e\eqsubst r] : (\ty^-_{D_j})^\nucleus .
\]
In particular, we define the challenge bounds for the assumptions in \(\Gamma\) as follows:
\[
  U_i \eqdef (\lambda a.\,u_i)^\kappa(w') : (\ty^-_{C_i})^\nucleus .
\]
Applying Lemma~\ref{lem:uniform-bound} componentwise to the first induction
hypothesis, with the bounds \(U_i\) defined above, gives
\[
  \forall \vec z\compat \vec U\,|\Gamma|^{\vec x}_{\vec z}
  \vdash
  \forall z\compat w'\, |A|^r_z .
\]
Substituting \(e\eqsubst r\) in the second induction hypothesis gives
\[
  \forall \vec z \compat \vec V\, |\Delta|^{\vec y}_{\vec z},\,
  \forall z \compat w'\, |A|^r_z
  \vdash
  |B|^t_b .
\]
Cutting these two sequents yields
\[
  \forall \vec z\compat \vec U\, |\Gamma|^{\vec x}_{\vec z},\,
  \forall \vec z\compat \vec V\, |\Delta|^{\vec y}_{\vec z}
  \vdash
  |B|^t_b
\]
which is the desired translation of \(\Gamma,\Delta\vdash B\), with the realizer \(t\) and the challenge bounds \(U_i\) and \(V_j\) for the assumptions in \(\Gamma\) and \(\Delta\), respectively.

\paragraph*{Affine logical rules}
We now consider the logical rules of Figure~\ref{fig:affine-rules}.

\medskip
\noindent
\textbf{Ex falso.}
The case of ex falso is immediate.  The rule is
\[
  \inferrule*[right=\(\mathrm{efq}\)]
  { }
  {\bot \vdash A}.
\]
Note that \(\bot\) has trivial realizer and challenge types. Let \(x: \One\) be the realizer variable for the assumption \(\bot\), and let \(y:\ty_A^-\) be the challenge to the conclusion \(A\).  We take the conclusion realizer to be the canonical realizer \(\cc_A:\ty_A^+\) from Lemma~\ref{lem:canonical}.  The challenge bound for the assumption \(\bot\) is \(\eta_\One(\star):\One^\nucleus\), which is inhabited by (IN1).  From the translated assumption
\[
  \forall z\compat \eta_\One(\star)\,|\bot|^x_z
\]
we obtain \(\bot\), since \(\star\compat\eta_\One(\star)\) by (IN1) and \(|\bot|^x_\star\) is just \(\bot\).  Hence \(|A|^{\cc_A}_y\) follows by ex falso in the target system.

\medskip
\noindent
\textbf{Conjunction left.}
The case of conjunction left uses the information pairing operation.  The rule is
\[
  \inferrule*[right=\(\wedge\mathrm{L}\)]
  {\Gamma,A,B \vdash C}
  {\Gamma,A\wedge B \vdash C}.
\]
Let \(x:\ty_{A\wedge B}^+\) be the realizer variable for the assumption
\(A\wedge B\), and write \(x\equiv\langle x_1,x_2\rangle\), where
\(x_1:\ty_A^+\) and \(x_2:\ty_B^+\).  By the induction hypothesis for the
premise, with realizer variables \(x_1\) and \(x_2\) for the assumptions \(A\)
and \(B\), we obtain a realizer \(r:\ty_C^+\) for \(C\), challenge bounds
\(\vec u\) for the assumptions in \(\Gamma\), and challenge bounds
\(v:(\ty_A^-)^\nucleus\) and \(w:(\ty_B^-)^\nucleus\) for the assumptions \(A\)
and \(B\), such that
\[
  \forall \vec z\compat\vec u\,|\Gamma|^{\vec a}_{\vec z}, \,
  \forall z\compat v\,|A|^{x_1}_z, \,
  \forall z\compat w\,|B|^{x_2}_z
  \vdash
  |C|^r_y .
\]
For the conclusion sequent \(\Gamma,A\wedge B \vdash C\), we use the same realizer \(r\) for \(C\) and the
same bounds \(\vec u\) for \(\Gamma\).  The challenge bound assigned to the
single assumption \(A\wedge B\) is the paired information object
\[
  v\otimes w : (\ty_A^-\times\ty_B^-)^\nucleus .
\]
It is inhabited by Lemma~\ref{lem:pairing-information}, because \(v\) and \(w\) are inhabited by the induction hypothesis.
Assume now the translated assumption
\[
  H:\forall q\compat v\otimes w\, |A\wedge B|^x_q .
\]
We derive the two assumptions needed for the induction hypothesis.  To prove
\(\forall z\compat v\,|A|^{x_1}_z\), let \(z\compat v\).  Since \(w\) is
inhabited, choose \(z_B:\ty_B^-\) with \(z_B\compat w\).  By
Lemma~\ref{lem:pairing-information},
\[
  \pair{z}{z_B}\compat v\otimes w .
\]
Hence \(H\) gives \(|A\wedge B|^x_{\pair{z}{z_B}}\),
unfolding to \(|A|^{x_1}_z \wedge |B|^{x_2}_{z_B}\); thus, \(|A|^{x_1}_z\).
Since \(z\compat v\) was arbitrary, we obtain \(\forall z\compat v\,|A|^{x_1}_z\).
The derivation of \(\forall z\compat w\,|B|^{x_2}_z\) is symmetric, using the inhabitedness of \(v\).
Applying the induction hypothesis for the premise then yields \(|C|^r_y\), as required.

\medskip
\noindent
\textbf{Conjunction right.}
The case of conjunction right is componentwise.  The rule is
\[
  \inferrule*[right=\(\wedge\mathrm{R}\)]
  {\Gamma \vdash A \\
   \Delta \vdash B}
  {\Gamma,\Delta \vdash A\wedge B}.
\]
Let \(y:\ty^-_{A\wedge B}\) be a challenge variable to the conclusion \(A \wedge B\), and write \(y\equiv\pair{y_1}{y_2}\) with
\(y_1:\ty_A^-\) and \(y_2:\ty_B^-\).  By the induction hypothesis for the first
premise, applied with challenge \(y_1\), we obtain a realizer \(r:\ty_A^+\) and
challenge bounds \(\vec u\) for the assumptions in \(\Gamma\), such that
\[
  \forall \vec z\compat\vec u\,|\Gamma|^{\vec x}_{\vec z}
  \vdash
  |A|^r_{y_1}.
\]
By the induction hypothesis for the second premise, applied with challenge
\(y_2\), we obtain a realizer \(s:\ty_B^+\) and challenge bounds \(\vec v\) for
the assumptions in \(\Delta\), such that
\[
  \forall \vec z\compat\vec v\,|\Delta|^{\vec w}_{\vec z}
  \vdash
  |B|^s_{y_2}.
\]
For the conclusion sequent, the realizer is
\[
  \pair{r}{s}:\ty^+_{A\wedge B}.
\]
The challenge bounds for the assumptions in \(\Gamma\) are \(\vec u\), and those
for the assumptions in \(\Delta\) are \(\vec v\).
They are inhabited by the induction hypotheses.
From the translated assumptions for \(\Gamma,\Delta\), the two induction hypotheses give
\(|A|^r_{y_1}\) and \(|B|^s_{y_2}\); thus, \(|A\wedge B|^{\pair{r}{s}}_y\) by unfolding the translation of conjunction.

\medskip

\medskip
\noindent
\textbf{Disjunction left.}
The case of disjunction left uses the lifted case distinction.  The rule is
\[
  \inferrule*[right=\(\vee\mathrm{L}\)]
  {\Gamma,A \vdash C \\
   \Delta,B \vdash C}
  {\Gamma,\Delta,A\vee B \vdash C}.
\]
Let \(x:\ty^+_{A\vee B}\) be the realizer variable for the assumption
\(A\vee B\), and write \(x\equiv\langle x_1,x_2,x_3\rangle\),
where \(x_1:\Two^\nucleus\), \(x_2:\ty_A^+\), and \(x_3:\ty_B^+\).
By the induction hypothesis for the left premise, with \(x_2\) as the realizer
variable for the assumption \(A\), we obtain a realizer \(r:\ty_C^+\), challenge
bounds \(\vec u\) for the assumptions in \(\Gamma\), and a challenge bound
\(v:(\ty_A^-)^\nucleus\) for \(A\), such that
\[
  \forall \vec z\compat\vec u\,|\Gamma|^{\vec a}_{\vec z},
  \forall z\compat v\,|A|^{x_2}_z
  \vdash
  |C|^r_y .
\]
Similarly, the induction hypothesis for the right premise, with \(x_3\) as the
realizer variable for the assumption \(B\), gives a realizer \(s:\ty_C^+\),
challenge bounds \(\vec w\) for the assumptions in \(\Delta\), and a challenge
bound \(v':(\ty_B^-)^\nucleus\) for \(B\), such that
\[
  \forall \vec z\compat\vec w\,|\Delta|^{\vec b}_{\vec z},
  \forall z\compat v'\,|B|^{x_3}_z
  \vdash
  |C|^s_y .
\]
For the conclusion sequent, the realizer for \(C\) is
\[
  q \eqdef \lif_C(x_1,r,s):\ty_C^+ .
\]
The challenge bounds for \(\Gamma\) and \(\Delta\) are respectively \(\vec u\)
and \(\vec w\).  The inhabited challenge bound for the assumption \(A\vee B\) is given by
\[
  \eta\bigl(\pair{v}{v'}\bigr)
  : \bigl((\ty_A^-)^\nucleus\times(\ty_B^-)^\nucleus\bigr)^\nucleus .
\]
Assume the translated assumptions for \(\Gamma\) and \(\Delta\), and assume also
\[
  \forall p\compat\eta(\pair{v}{v'})\, |A\vee B|^x_p.
\]
By (IN1), \(\pair{v}{v'}\compat\eta(\pair{v}{v'})\), so the above assumption gives
\(|A\vee B|^x_{\pair{v}{v'}}\), unfolding to
\[
  (\sft \compat x_1 \wedge \forall z\compat v\,|A|^{x_2}_z) \vee
  (\sff \compat x_1 \wedge \forall z\compat v'\,|B|^{x_3}_z).
\]
In the left case, together with the translated assumptions for \(\Gamma\), the left induction
hypothesis gives \(|C|^r_y\).  Since \(\sft\compat x_1\), correctness of lifted case distinction (Corollary~\ref{cor:lifted-if}) gives \(r\nule_C q\), and monotonicity in realizers (Lemma~\ref{lem:formula-monotonicity}) yields \(|C|^q_y\).
Similarly, the right case also gives \(|C|^q_y\); thus, \(|C|^q_y\) follows in both cases.

\medskip
\noindent
\textbf{Disjunction right.}
The case of disjunction right is similar for the two injections; we spell
out the left injection.  The rule is
\[
  \inferrule*[right=\(\vee\mathrm{R}_l\)]
  {\Gamma \vdash A}
  {\Gamma \vdash A\vee B}.
\]
Let \(y: (\ty^-_A)^\nucleus \times (\ty^-_B)^\nucleus\) be the challenge variable to the conclusion \(A\vee B\), and write \(y\equiv \pair{y_1}{y_2}\),
where \(y_1:(\ty_A^-)^\nucleus\) and \(y_2:(\ty_B^-)^\nucleus\).
By the induction hypothesis for the premise, we obtain a realizer \(r:\ty_A^+\) for \(A\) and challenge bounds \(\vec u\) for the assumptions in \(\Gamma\) where \(\vec u\) may depend on the challenge variable \(z:\ty^-_A\), such that
\[
  \forall \vec z\compat \vec u\,|\Gamma|^{\vec x}_{\vec z}
  \vdash
  |A|^r_z .
\]
For the conclusion sequent, we define the realizer
\[
  q \eqdef \langle \eta(\sft), r, \cc_B\rangle : \ty^+_{A\vee B}
\]
where \(\cc_B:\ty_B^+\) is the canonically chosen realizer from
Lemma~\ref{lem:canonical}, and define the challenge bounds for the assumptions in \(\Gamma\) by
\[
  U_i \eqdef (\lambda z.\,u_i)^\kappa(y_1) : (\ty_{C_i}^-)^\nucleus
  \qquad\text{for each } C_i \in \Gamma.
\]
By Lemma~\ref{lem:uniform-bound}, these bounds are inhabited, assuming \(y_1\) is inhabited, and
\[
  \forall \vec z\compat \vec U\,|\Gamma|^{\vec x}_{\vec z}
  \vdash
  \forall z \compat y_1 \, |A|^r_z .
\]
Because \(\sft \compat \eta(\sft)\) by (IN1), the conclusion of the above sequent becomes
\[
  \sft \compat \eta(\sft) \wedge \forall z \compat y_1 \, |A|^r_z
\]
which is exactly the left disjunct of \(|A\vee B|^q_y\) after unfolding the translation of disjunction.
Hence
\[
  \forall \vec z\compat \vec U\,|\Gamma|^{\vec x}_{\vec z}
  \vdash
  |A\vee B|^q_y
\]
follows, as required.
The right injection \(\vee\mathrm{R}_r\) is symmetric.

\medskip
\noindent
\textbf{Implication left.}
The case of implication left requires to propagate challenge bounds.  The rule is
\[
  \inferrule*[right=\(\mathord{\to}\mathrm{L}\)]
  {\Gamma \vdash A \\
   \Delta,B \vdash C}
  {\Gamma,\Delta,A\to B \vdash C}.
\]
Let \(x:\ty^+_{A\to B}\) be the realizer variable for the assumption
\(A\to B\), and write \( x\equiv\langle x_1,x_2\rangle\),
where \(x_1:\ty_A^+\to\ty_B^+\), and \(x_2:\ty_A^+\times\ty_B^-\to(\ty_A^-)^\nucleus\).
Let \(y:\ty_C^-\) be the challenge to the conclusion \(C\).

By the induction hypothesis for the first premise \(\Gamma\vdash A\), with
challenge variable \(a:\ty_A^-\), we obtain a realizer \(r:\ty_A^+\) and
challenge bounds \(\vec u\) for the assumptions in \(\Gamma\), such that
\[
  \forall \vec z\compat\vec u\,|\Gamma|^{\vec g}_{\vec z}
  \vdash
  |A|^r_a .
\]
Here the bounds \(u_i\) may depend on \(a\), while \(r\) does not.  By the
induction hypothesis for the second premise \(\Delta,B\vdash C\), with realizer
variable \(e:\ty_B^+\) for the assumption \(B\), we obtain a realizer
\(s:\ty_C^+\), challenge bounds \(\vec v\) for the assumptions in \(\Delta\), and a challenge bound
\(w:(\ty_B^-)^\nucleus\) for \(B\), such that
\[
  \forall \vec z\compat\vec v\,|\Delta|^{\vec d}_{\vec z},
  \forall z\compat w\,|B|^e_z
  \vdash
  |C|^s_y .
\]
We substitute \(e\eqsubst x_1(r)\) in the second induction hypothesis and write
\[
  s'\eqdef s[e\eqsubst x_1(r)],\qquad
  v'_j\eqdef v_j[e\eqsubst x_1(r)],\qquad
  w'\eqdef w[e\eqsubst x_1(r)].
\]
Thus it remains to derive
\(\forall z\compat w'\,|B|^{x_1(r)}_z\).

For the assumption \(A\to B\) in the conclusion sequent, we use the challenge bound
\[
  P \eqdef (\lambda z^{\ty_B^-}.\,\eta(\pair{r}{z}))^\kappa(w')
  :(\ty_A^+\times\ty_B^-)^\nucleus .
\]
For the assumptions in \(\Gamma\), we first collect all challenges to \(A\) that
may be requested by the implication assumption against challenges to \(B\) below
\(w'\):
\[
  W_A \eqdef (\lambda z^{\ty_B^-}.\,x_2(r,z))^\kappa(w')
  : (\ty_A^-)^\nucleus .
\]
Then define the challenge bounds for \(G_i\in\Gamma\) by
\[
  U_i \eqdef (\lambda a^{\ty_A^-}.\,u_i)^\kappa(W_A)
  : (\ty_{G_i}^-)^\nucleus .
\]
By Lemma~\ref{lem:uniform-bound} applied to the first induction hypothesis, the translated assumptions for \(\Gamma\), bounded by \(\vec U\), yield
\[
  \forall a\compat W_A\,|A|^r_a .
\]

Assume now the translated assumptions for \(\Gamma\), bounded by \(\vec U\), the
translated assumptions for \(\Delta\), bounded by \(\vec v'\), and the translated
assumption for \(A\to B\), bounded by \(P\):
\[
  H:\forall p\compat P\,|A\to B|^x_p .
\]
We show \(\forall z\compat w'\,|B|^{x_1(r)}_z\).  Let \(z\compat w'\).  By (IN1), we have \(\pair{r}{z}\compat \eta(\pair{r}{z})\), and by (IN2) applied to \(\lambda z.\eta(\pair{r}{z})\), we have
\[
  \pair{r}{z}\compat P .
\]
Hence \(H\) yields \(|A\to B|^x_{\pair{r}{z}}\), which unfolds to
\[
  \bigl(\forall a\compat x_2(r,z)\,|A|^r_a\bigr)
  \to
  |B|^{x_1(r)}_z .
\]
It remains to prove the antecedent.  If \(a\compat x_2(r,z)\), then
(IN2) applied to \(\lambda z.\,x_2(r,z)\), together with
\(z\compat w'\), gives
\[
  x_2(r,z)\nule W_A .
\]
Therefore \(a\compat W_A\), and the uniformized conclusion
\(\forall a\compat W_A\,|A|^r_a\) gives \(|A|^r_a\).  Thus
\(|B|^{x_1(r)}_z\).  Since \(z\compat w'\) was arbitrary, we have
\[
  \forall z\compat w'\,|B|^{x_1(r)}_z .
\]
The second induction hypothesis substituted by \(e:=x_1(r)\) now yields \(|C|^{s'}_y\).

Thus the conclusion sequent is realized by \(s'\), with challenge bounds
\(\vec U\) for \(\Gamma\), \(\vec v'\) for \(\Delta\), and \(P\) for the
assumption \(A\to B\).  Inhabitedness of these bounds follows from the induction
hypotheses and axioms (IN1) and (IN2).

\medskip
\noindent
\textbf{Implication right.}
The case of implication right packages the realizer and challenge-bound
transformers obtained from the premise.  The rule is
\[
  \inferrule*[right=\(\mathord{\to}\mathrm{R}\)]
  {\Gamma,A \vdash B}
  {\Gamma \vdash A\to B}.
\]
Let \(y:\ty^-_{A\to B}\) be a challenge variable to the conclusion \(A \to B\), and write
\(y\equiv\pair{a}{b}\), where \(a:\ty_A^+\) and \(b:\ty_B^-\).
By the induction hypothesis for the premise \(\Gamma,A \vdash B\), with \(a\) as the realizer variable for the additional assumption \(A\) and \(b\) as the challenge variable for the conclusion \(B\), we obtain a realizer \(r:\ty_B^+\), challenge bounds
\(u_i:(\ty_{G_i}^-)^\nucleus\) for the assumptions \(G_i\in\Gamma\), and a
challenge bound \(v:(\ty_A^-)^\nucleus\) for \(A\), such that
\[
  \forall \vec z\compat\vec u\,|\Gamma|^{\vec g}_{\vec z},
  \forall z\compat v\,|A|^a_z
  \vdash
  |B|^r_b .
\]
Here \(r\) may depend on the realizer variables for \(\Gamma\) and on \(a\), but
not on \(b\), by the variable condition of the induction hypothesis.  The bounds
\(\vec u\) and \(v\) may depend on both \(a\) and \(b\).
For the conclusion \(A\to B\), define
\[
  h\eqdef
  \left\langle
    \lambda a^{\ty_A^+}.\,r,\;
    \lambda y^{\ty_A^+\times\ty_B^-}.\,v
  \right\rangle
  :\ty^+_{A\to B},
\]
where in the first component \(r\) is the premise realizer obtained with
realizer variable \(a\), and in the second component \(v\) is the corresponding
challenge bound for the assumption \(A\), with \(y=\pair{a}{b}\).
For each assumption \(G_i\in\Gamma\) in the conclusion sequent, the challenge bound is
\[
  U_i\eqdef u_i :(\ty_{G_i}^-)^\nucleus,
\]
again with \(a\) and \(b\) read as the components of the conclusion challenge \(y\).
These bounds are inhabited by the induction hypothesis for the premise.
Assume the translated assumptions for \(\Gamma\), bounded by \(\vec U\).
To prove the translated conclusion, unfold the implication clause:
\[
  |A\to B|^h_y
  \equiv
  \bigl(\forall z\compat h_2(y)\,|A|^{a}_z\bigr)
  \to
  |B|^{h_1(a)}_b .
\]
Since \(h_2(y)=v\) and \(h_1(a)=r\), this is exactly
\[
  \bigl(\forall z\compat v\,|A|^{a}_z\bigr)
  \to
  |B|^r_b .
\]
The latter follows from the induction hypothesis for the premise, together with
the translated assumptions for \(\Gamma\).  Hence \(|A\to B|^h_y\).

\medskip
\noindent
\textbf{Universal left.}
The case of universal left uses the exact information attached to
the chosen instance.  The rule is
\[
  \inferrule*[right=\(\forall\mathrm{L}\)]
  {\Gamma,A(t)\vdash B}
  {\Gamma,\forall x^\sigma A(x)\vdash B}.
\]
Let \(p: \sigma^\nucleus\to\ty_A^+\) be a realizer variable for the assumption \(\forall x^\sigma A(x)\). By the induction hypothesis for the premise, with \(p(\eta(t))\) as the
realizer for the assumption \(A(t)\), we obtain a realizer
\(r:\ty_B^+\), challenge bounds \(\vec u\) for the assumptions in \(\Gamma\),
and a challenge bound \(v:(\ty_A^-)^\nucleus\) for \(A(t)\), such that
\[
  \forall \vec z\compat\vec u\,|\Gamma|^{\vec g}_{\vec z},
  \forall z\compat v\, |A(t)|^{p(\eta(t))}_z
  \vdash
  |B|^r_y .
\]
For the conclusion sequent, we use the same realizer \(r\) for \(B\), the same
bounds \(\vec u\) for \(\Gamma\), and the following challenge bound for the
assumption \(\forall x^\sigma A(x)\):
\[
  W \eqdef (\lambda z^{\ty_A^-}.\,\eta(\pair{\eta(t)}{z}))^\kappa(v)
  : (\sigma^\nucleus\times\ty_A^-)^\nucleus,
\]
which is inhabited by (IN1), (IN2) and the induction hypothesis for the premise.
Assume the translated assumption
\[
  H:\forall q\compat W\,|\forall x^\sigma A(x)|^p_q .
\]
We derive the assumption needed for the premise induction hypothesis.
Let \(z\compat v\).
We have
\[
\pair{\eta(t)}{z}\compat\eta(\pair{\eta(t)}{z})
\]
by (IN1), and
\[
\eta(\pair{\eta(t)}{z})\nule W
\]
by (IN2), hence
\[
\pair{\eta(t)}{z}\compat W.
\]
Then \(H\) gives \(|\forall x^\sigma A(x)|^p_{\pair{\eta(t)}{z}}\),
unfolding to
\[
  \forall x\compat\eta(t)\, |A(x)|^{p(\eta(t))}_z .
\]
Since \(t\compat\eta(t)\) by (IN1), we obtain
\[
  |A(t)|^{p(\eta(t))}_z .
\]
Because \(z\compat v\) was arbitrary, we have
\[
  \forall z\compat v\, |A(t)|^{p(\eta(t))}_z .
\]
The premise induction hypothesis now yields \(|B|^r_y\), as required.

\medskip
\noindent
\textbf{Universal right.}
The case of universal right uses lifted Kleisli extension.  The rule is
\[
  \inferrule*[right=\(\forall\mathrm{R}\qquad x\notin \FV(\Gamma)\)]
  {\Gamma\vdash A(x)}
  {\Gamma\vdash \forall x^\sigma A(x)} .
\]
Let \(y: \sigma^\nucleus \times \ty^-_{A}\) be a challenge variable to the conclusion \(\forall x^\sigma A(x)\), and write \(y\equiv\pair{a}{b}\), where \(a:\sigma^\nucleus\) and \(b:\ty_A^-\).
By the induction hypothesis for the premise, with exact variable \(x:\sigma\) and challenge \(b\), we obtain a realizer \(r(x):\ty_A^+\)
and challenge bounds \(u_i(x):(\ty^-_{G_i})^\nucleus\) for the assumptions
\(G_i\in\Gamma\), such that
\[
  \forall \vec z\compat\vec u(x)\,|\Gamma|^{\vec g}_{\vec z}
  \vdash
  |A(x)|^{r(x)}_b .
\]
Here \(r(x)\) and the bounds \(u_i(x)\) may depend on \(x\).  Since
\(x\notin\FV(\Gamma)\), the realizer variables \(\vec g\) for the assumptions
in \(\Gamma\) do not depend on \(x\).
For the conclusion \(\forall x^\sigma A(x)\), define the realizer
\[
  h \eqdef \lke_{\sigma,A}(\lambda x^\sigma.\,r(x))
  : \sigma^\nucleus \to \ty_A^+ .
\]
For each assumption \(G_i\in\Gamma\), define the challenge bound
\[
  U_i\eqdef(\lambda x^\sigma.\,u_i(x))^\kappa(a)
  :(\ty^-_{G_i})^\nucleus .
\]
By Lemma~\ref{lem:uniform-bound}, applied componentwise to the induction
hypothesis for the premise, the bounds \(U_i\) are inhabited, assuming \(a\) is inhabited, and the translated assumptions for \(\Gamma\), bounded by \(\vec U\), yield
\[
  \forall x\compat a\, |A(x)|^{r(x)}_b .
\]
Now let \(x\compat a\).  From the displayed formula we have
\[
  |A(x)|^{r(x)}_b .
\]
Moreover, Lemma~\ref{lem:lke-propagation} gives
\[
  r(x)\nule_A h(a).
\]
By monotonicity in realizers, we obtain
\[
  |A(x)|^{h(a)}_b .
\]
Since \(x\compat a\) was arbitrary, this proves
\[
  |\forall x^\sigma A(x)|^h_{\pair{a}{b}} .
\]

\medskip
\noindent
\textbf{Existential left.}
The case of existential left uses lifted Kleisli extension to remove the witness parameter.  The rule is
\[
  \inferrule*[right=\(\exists\mathrm{L}\)]
  {\Gamma,A(x) \vdash B}
  {\Gamma,\exists x^\sigma A(x) \vdash B}
  \qquad x\notin\FV(\Gamma,B).
\]
Let \(p: \sigma^\nucleus \times \ty^+_{A}\) be a realizer variable for the
assumption \(\exists x^\sigma A(x)\), and write \(p\equiv\pair{p_1}{p_2}\),
where \(p_1:\sigma^\nucleus\) and \(p_2:\ty_A^+\).

By the induction hypothesis for the premise, with exact eigenvariable \(x:\sigma\), with \(p_2\) as the realizer variable for the assumption \(A(x)\), and with challenge \(y:\ty_B^-\),
we obtain a realizer \(r(x):\ty_B^+\), challenge bounds
\(u_i(x):(\ty^-_{G_i})^\nucleus\) for the assumptions \(G_i\in\Gamma\), and a
challenge bound \(v(x):(\ty_A^-)^\nucleus\) for the assumption \(A(x)\), such
that
\[
  \forall \vec z\compat\vec u(x)\,|\Gamma|^{\vec g}_{\vec z},
  \forall z\compat v(x)\,|A(x)|^{p_2}_z
  \vdash
  |B|^{r(x)}_y .
\]
The side condition \(x\notin\FV(\Gamma,B)\) ensures that the formulas in
\(\Gamma\) and the conclusion \(B\) do not contain \(x\), but the extracted
terms \(r(x)\), \(u_i(x)\), and \(v(x)\) may still depend on \(x\).

For the conclusion sequent, define the realizer for \(B\) by lifting the premise
realizers along the witness information:
\[
  r'\eqdef \lke_{\sigma,B}(\lambda x^\sigma.\,r(x),p_1):\ty_B^+ .
\]
For each assumption \(G_i\in\Gamma\), define the uniform challenge bound
\[
  U_i\eqdef (\lambda x^\sigma.\,u_i(x))^\kappa(p_1)
  :(\ty^-_{G_i})^\nucleus .
\]
For the assumption \(\exists x^\sigma A(x)\), first collect the challenge bounds
for the body over the possible witnesses:
\[
  V\eqdef (\lambda x^\sigma.\,v(x))^\kappa(p_1)
  :(\ty_A^-)^\nucleus .
\]
The challenge bound assigned to the existential assumption is then
\[
  W\eqdef \eta(V):((\ty_A^-)^\nucleus)^\nucleus .
\]

Assume the translated assumptions for \(\Gamma\), bounded by \(\vec U\), and the
translated existential assumption
\[
  H:\forall u\compat W\,|\exists x^\sigma A(x)|^p_u .
\]
By \textup{(IN1)}, \(V\compat\eta(V)=W\), so \(H\) gives
\[
  |\exists x^\sigma A(x)|^p_V .
\]
Unfolding the existential clause, we obtain
\[
  \exists x\compat p_1\,\forall z\compat V\,|A(x)|^{p_2}_z .
\]
Choose such an \(x\).  Since \(x\compat p_1\), \textup{(IN2)} applied to
\(\lambda x.\,v(x)\) gives
\[
  v(x)\nule V .
\]
Hence bounded monotonicity yields
\[
  \forall z\compat v(x)\,|A(x)|^{p_2}_z .
\]
Moreover, Lemma~\ref{lem:uniform-bound}, applied to the premise
induction hypothesis and the bounds \(\vec U\), gives the translated assumptions
for \(\Gamma\) bounded by \(\vec u(x)\).  Therefore the premise induction
hypothesis yields
\[
  |B|^{r(x)}_y .
\]
Finally, Lemma~\ref{lem:lke-propagation}, using \(x\compat p_1\), gives
\[
  r(x)\nule_B r' .
\]
By monotonicity in realizers, we conclude \(|B|^{r'}_y\) as required.
Inhabitedness of the bounds \(\vec U\) and \(W\) follows from the
induction hypothesis and the axioms (IN1) and (IN2).

\medskip
\noindent
\textbf{Existential right.}
The case of existential right uses exact information for the displayed
witness.  The rule is
\[
  \inferrule*[right=\(\exists\mathrm{R}\)]
  {\Gamma\vdash A(t)}
  {\Gamma\vdash \exists x^\sigma A(x)}.
\]
Let \(y: (\ty_A^-)^\nucleus\) be a challenge variable to the conclusion \(\exists x^\sigma A(x)\).
By the induction hypothesis for the premise, we have a realizer \(r:\ty_A^+\) for \(A(t)\) and challenge bounds \(\vec u\) for the assumptions in \(\Gamma\), such that
\[
  \forall \vec z\compat\vec u\,|\Gamma|^{\vec g}_{\vec z}
  \vdash
  |A(t)|^r_v
\]
where the bounds \(\vec u\) may depend on the challenge variable \(v:\ty_A^-\) to \(A(t)\).

For the conclusion sequent, we define the realizer for the conclusion \(\exists x^\sigma A(x)\) by
\[
  q\eqdef\pair{\eta(t)}{r}: \sigma^\nucleus \times \ty^+_A
\]
and define the challenge bounds \(U_i:(\ty^-_{G_i})^\nucleus\) for the assumptions \(G_i \in \Gamma\) by
\[
  U_i \equiv (\lambda v^{\ty_A^-}.\,u_i)^\kappa(y) : (\ty^-_{G_i})^\nucleus.
\]
By Lemma~\ref{lem:uniform-bound}, applied to the premise induction hypothesis, the bounds \(U_i\) are inhabited, assuming \(y\) is inhabited, and
\[
  \forall \vec z\compat \vec U\,|\Gamma|^{\vec g}_{\vec z}
  \vdash
  \forall v\compat y\,|A(t)|^r_v .
\]
By (IN1), \(t\compat\eta(t)\), so the conclusion of the above sequent becomes
\[
  \exists x \compat \eta(t) \, \forall v \compat y \, |A(t)|^r_v
\]
which is exactly the translation \(|\exists x^\sigma A(x)|^q_y\).

\paragraph*{Equality axioms and defining equations}
For the neutral source systems, the equality axioms and the defining equations
for terms carry no computational content.  Primitive formulas have trivial
realizer and challenge types, and their translation is the formula itself.  Thus
an atomic equational axiom, such as a defining equation for a term constant, is
realized by \(\star:\One\), and its translated conclusion is just the same
equation, available as an axiom of the target system.  The same applies to the
constructor separation and injectivity axioms at the base types.

The neutral equality axioms with logical structure, such as symmetry,
transitivity, and congruence for term contexts, are realized by the canonical
logical realizers.  For example, the congruence axiom
\[
  s =_\sigma t \to r[x\eqsubst s] =_\tau r[x\eqsubst t]
\]
is realized by the trivial implication realizer: it sends the unit realizer of
the antecedent to the unit realizer of the consequent, and assigns the canonical
challenge bound \(\cc_\One\) to the antecedent.  After unfolding the translation,
the required formula is exactly the corresponding congruence axiom in the target
system.

For the observational source system \(\OHAo\), only base-type equalities are
primitive.  Hence the base equality axioms and the base-type defining equations
are handled as above.  Equalities at product and function types are not primitive,
but abbreviations expanded by the clauses defining observational equality.
Therefore their treatment is by induction on the type of the equality:
product equality reduces to the two component equalities, and function equality
reduces to pointwise equality.  The realizing terms are again the canonical ones
provided by the logical clauses of the translation, while the target derivations
come from unfolding observational equality and using the already handled lower-type
cases.  No congruence or extensionality rule for arbitrary term contexts is used
for \(\OHAo\).

\paragraph*{Closed induction rule}
\textbf{Closed induction.}
Suppose the last rule is closed induction
\[
  \inferrule*[right=\(\mathrm{ind}\)]
  {\vdash A(0) \\
    \vdash \forall k^\N(A(k)\to A(\suc k))}
  {\vdash \forall n^\N A(n)}.
\]
By the induction hypotheses for the two premises, we have a base realizer \(r_0:\ty_A^+\) such that
\[
  \vdash |A(0)|^{r_0}_b
\]
for arbitrary challenge \(b:\ty_A^-\), and a step realizer
\( r_S : \N^\nucleus\to \ty^+_{A(k)\to A(\suc k)} \)
such that
\[
  \vdash |\forall k^\N(A(k)\to A(\suc k))|^{r_S}_d
\]
for all challenges \(d:\ty^-_{\forall k^\N(A(k)\to A(\suc k))}\).
For \(p:\N^\nucleus\), write \(r_S(p)\equiv \langle r_S^+(p),r_S^-(p)\rangle\),
where
\[
  r_S^+(p):\ty_A^+\to\ty_A^+
  \qquad\text{and}\qquad
  r_S^-(p):\ty_A^+\times\ty_A^-\to(\ty_A^-)^\nucleus .
\]
Unfolding the translation of the step premise, this is
\[
  \vdash \forall k \compat p \bigl( ( \forall z \compat r^-_S(p)(\pair{a}{u}) |A(k)|^a_z ) \to |A(\suc k)|^{r^+_S(p)(a)}_u \bigr)
\]
for any index bound \(p:\N^\nucleus\), realizer \(a:\ty_A^+\) for \(A(k)\), and challenge \(u:\ty_A^-\) to \(A(\suc k)\).

Define the realizer \(h:\N^\nucleus\to\ty_A^+\) for conclusion \(\forall n^\N A(n)\) by lifted recursion:
\[
  h(p) \eqdef
  \lrec_A \bigl(r_0,\, \lambda q^{\N^\nucleus}. r_S^+(q) ,\, p \bigr)
  : \ty_A^+ .
\]
Equivalently, for exact numbers \(n:\N\), define
\[
  H(n) \eqdef
  \rec_{\ty_A^+}
  \bigl(r_0, \, \lambda k^\N. r_S^+(\eta(k)) ,\, n\bigr) : \ty_A^+,
\]
so that \(h(p) \equiv \lke_{\N,A}(H,p)\).

We first prove, by closed induction in the target system, the auxiliary closed formula
\[
  \vdash \forall n^\N \, \forall v^{\ty_A^-} \, |A(n)|^{H(n)}_v .
\]
For \(n=0\), this is the conclusion supplied by the induction hypothesis for the base premise.
For the step, fix \(k:\N\) and assume
\[
  \forall v^{\ty_A^-} \, |A(k)|^{H(k)}_v
\]
and fix the challenge \(u:\ty_A^-\) to \(A(\suc k)\).
We need to prove \(|A(\suc k)|^{H(\suc k)}_u\).
We instantiate the unfolded step premise translation with the index bound \(\eta(k)\), the realizer \(H(k)\) for \(A(k)\), the challenge \(u\) to \(A(\suc k)\), and the index \(k\), and then obtain
\[
  \bigl(
  \forall z \compat r^-_S(\eta(k))(\pair{H(k)}{u})\, |A(k)|^{H(k)}_z
  \bigr)
  \to
  |A(\suc k)|^{r^+_S(\eta(k))(H(k))}_u
\]
because \(k \compat \eta(k)\) by (IN1).
The antecedent follows from the assumption \(\forall v^{\ty_A^-} \, |A(k)|^{H(k)}_v\),
and the consequent is precisely \(|A(\suc k)|^{H(\suc k)}_u\).
Thus closed induction proves the auxiliary formula.

It remains to verify the translated conclusion.
A challenge to \(\forall n^\N A(n)\) has the form \(\langle p,v\rangle\) with
an index bound \(p:\N^\nucleus\) and a challenge \(v:\ty^-_{A}\).  We must prove
\[
  \forall n\compat p\, |A(n)|^{h(p)}_v .
\]
Let \(n\compat p\).  From the auxiliary formula we have
\[
  |A(n)|^{H(n)}_v .
\]
By correctness of lifted recursion (Corollary~\ref{cor:lifted-rec}), we have
\[
  H(n) \nule_A h(p).
\]
By monotonicity in realizers (Lemma~\ref{lem:formula-monotonicity}), we obtain
\[
  |A(n)|^{h(p)}_v,
\]
as required.

\paragraph*{Contraction rule}
To obtain full soundness, it remains to verify the single non-affine structural rule used in the full soundness theorem (Theorem~\ref{thm:full-soundness}).

\medskip
\noindent
\textbf{Contraction.}
Suppose the last rule is contraction
\[
  \inferrule*[right=\(\mathrm{contr}\)]
  {\Gamma,A,A \vdash B}
  {\Gamma,A \vdash B}.
\]
By the induction hypothesis for the premise, we have a realizer \(r:\ty_B^+\), challenge bounds \(\vec w\) for the assumptions in \(\Gamma\), and two challenge bounds \(u,v:(\ty_A^-)^\nucleus\) for the two occurrences of \(A\), such that
\[
  \forall \vec z\compat \vec w \,|\Gamma|^{\vec x}_{\vec z},\,
  \forall z\compat u\,|A|^x_z,\,
  \forall z\compat v\,|A|^x_z
  \vdash
  |B|^r_b .
\]
For the contracted sequent \(\Gamma,A \vdash B\), the realizer for \(B\) and the bounds for
\(\Gamma\) are unchanged.  The single remaining occurrence of
\(A\) is assigned the challenge bound
\[
  \ctr_A(x,u,v) : (\ty_A^-)^\nucleus .
\]
It is inhabited by the inhabitedness preservation property of the contraction operation and the induction hypothesis for \(u\) and \(v\).
Now assume the single translated assumption
\[
  \forall z\compat \ctr_A(x,u,v)\,|A|^x_z .
\]
By the defining property of the contraction operation, it yields the conjunction
\[
  \bigl(\forall z\compat u\,|A|^x_z\bigr)
  \wedge
  \bigl(\forall z\compat v\,|A|^x_z\bigr).
\]
Using the two projections of this conjunction as the two translated assumptions
for the two occurrences of \(A\), the induction hypothesis for the premise gives
\(|B|^r_b\).

This derivation uses only ordinary logical reasoning in the target
system, namely implication elimination and conjunction elimination, together
with the defining property of \(\ctr_A\).  No structural contraction is applied in
the target: the two assumptions needed by the premise are produced as two
components of a conjunction from the single contracted translated assumption.

\section{An affine functional interpretation with continuity information}
\label{app:continuity-fi}

We finish by spelling out how the affine soundness theorem can make extracted
realizers carry continuity information.  We use the pointwise-continuity nucleus
from the parametrized translation of System~\(\T\)
\cite[Section~3.3]{Xu2020Gentzen}, adapted to the present realizer--challenge
setting.

Two mild adjustments to the abstract presentation are needed for this example.
First, the nucleus is oracle-parametrized: the information types and the
operations \(\eta\) and \(\kappa\) are defined uniformly, but compatibility is
interpreted relative to a fixed oracle parameter \(\alpha:\baire\).  Thus the
axioms \textup{(IN1)} and \textup{(IN2)} are proved uniformly in \(\alpha\).
Second, to read the continuity data carried by positive realizers, we use a
standard strengthened invariant of the soundness induction: extracted realizers
are inhabited, in the sense that the information objects occurring in their
positive components are compatible with suitable exact values.  This
inhabitedness invariant is not needed for the abstract correctness theorem, and
so we did not include it in the main statement.  In the present example,
however, it is exactly what shows that the modulus components inside the
extracted realizer are correct.

We now recall the pointwise-continuity nucleus in this oracle-parametrized form.
For each type \(\sigma\), define
\[
  \sigma^\nucleus \eqdef (\baire \to \sigma) \times (\baire \to \N).
\]
If \(a:\sigma^\nucleus\), write \(\val_a:\baire\to\sigma\) and \(\modu_a:\baire\to\N\) for its two components. The intended meaning is that \(\val_a(\alpha)\) is the value represented by \(a\) at the oracle \(\alpha : \baire\), while \(\modu_a(\alpha)\) is a local modulus witnessing that this value is stable under perturbations of \(\alpha\) below that modulus.
Accordingly, for each oracle \(\alpha:\baire\), define a compatibility relation \(\compat_\sigma^\alpha\) by
\[
  x \compat_\sigma^\alpha a
  \eqdef
  (x =_\sigma \val_a(\alpha)) \wedge
  \forall \beta (\alpha =_{\modu_a(\alpha)} \beta \to \val_a(\alpha) =_\sigma \val_a(\beta)),
\]
where \(\alpha =_m \beta\) means that \(\alpha\) and \(\beta\) agree on the first \(m\) values, i.e., \(\alpha(i) = \beta(i)\) for all \(i < m\).
Thus an exact object \(x:\sigma\) is compatible with \(a:\sigma^\nucleus\) at \(\alpha\) when \(a\) computes the value \(x\) at \(\alpha\), together with a pointwise modulus showing that this value is locally constant around \(\alpha\).

The nucleus operations are defined as follows:
\[
  \begin{aligned}
    \eta(x)
     & \eqdef
    \bigl\langle
    \lambda \alpha. x,\,
    \lambda \alpha. 0
    \bigr\rangle,
    \\
    f^\kappa(a)
     & \eqdef
    \bigl\langle
    \lambda \alpha. \val_{f(\val_a(\alpha))}(\alpha),\,
    \lambda \alpha. \max(\modu_{f(\val_a(\alpha))}(\alpha),\, \modu_a(\alpha))
    \bigr\rangle .
  \end{aligned}
\]
The unit \(\eta\) sends an exact object to the corresponding constant value with zero modulus. Then (IN1) is immediate from the definition.
For the Kleisli extension of \(f:\sigma\to\tau^\nucleus\) at input \(a:\sigma^\nucleus\), the value component first computes the exact input \(\val_a(\alpha)\), applies \(f\) to that input, and then reads the value component of the resulting information object at the same oracle \(\alpha\).
Its modulus component records the two sources of oracle dependence: the input information must be stable enough to keep \(\val_a(\alpha)\) fixed, and the output information \(f(\val_a(\alpha))\) must be stable at that fixed input.  Hence we take the maximum of \(\modu_a(\alpha)\) and \(\modu_{f(\val_a(\alpha))}(\alpha)\).
For \(\textup{(IN2)}\), suppose \(x\compat^\alpha_\sigma a\).  Then
\(x=\val_a(\alpha)\).  If \(\beta\) agrees with \(\alpha\) below
\(\modu_{f^\kappa(a)}(\alpha)\), then it agrees below \(\modu_a(\alpha)\), hence
\(\val_a(\beta)=\val_a(\alpha)\).  It also agrees below
\(\modu_{f(\val_a(\alpha))}(\alpha)\), so the output information
\(f(\val_a(\alpha))\) is stable at \(\alpha\).  Therefore \(f(x)\nule
f^\kappa(a)\) relative to \(\alpha\).

Thus the affine soundness construction can be run with this oracle-parametrized
continuity nucleus, uniformly in \(\alpha\).  The resulting interpretation has
the same realizer--challenge shape as in Theorem~\ref{thm:affine-soundness}, but
the information objects appearing in positive realizer components now carry both
value and modulus data.  Using the strengthened inhabitedness invariant
mentioned above, this data is certified: whenever such an information object is
shown to be inhabited relative to \(\alpha\), its second component is a valid
pointwise modulus at \(\alpha\).

To see this information explicitly, consider a proof of a formula of the form
\[
  \bigl(\forall n^\N\exists m^\N A(n,m)\bigr)\to \exists k^\N B(k),
\]
where \(A\) and \(B\) are primitive formulas.  A realizer for the antecedent has,
up to the trivial unit component, type
\[
  \N^\nucleus\to\N^\nucleus,
\]
while a realizer for the consequent has type \(\N^\nucleus\).  Hence the
realizer-transforming component extracted for the implication has type
\[
  R:(\N^\nucleus\to\N^\nucleus)\to\N^\nucleus .
\]
To recover the ordinary functional computed by this transformer, together with
its continuity modulus, we apply \(R\) to the generic sequence
\[
  \Omega:\N^\nucleus\to\N^\nucleus
\]
defined by
\[
  \Omega(a) \eqdef
  \bigl\langle
  \lambda \alpha.\,\alpha(\val_a(\alpha)),\,
  \lambda \alpha.\,\max(\modu_a(\alpha),\val_a(\alpha)+1)
  \bigr\rangle . 
\]
The value component of \(\Omega(a)\) queries the oracle at the index represented
by \(a\).  Its modulus says that this query is stable once the index is stable
and the oracle is fixed up to the queried position.

The resulting term \(R(\Omega):\N^\nucleus\) should be compared with what is extracted
from the same proof by the exact-information instance, that is, the usual Dialectica-style interpretation.
This comparison can be justified by a formula-indexed logical relation between realizers in these two instances; we only record the resulting correspondence needed here.
Write the realizer-transforming component of the exact-information realizer as
\[
  R_{\mathrm{ex}}:(\N\to\N)\to\N .
\]
For each oracle \(\alpha:\baire\) realizing the antecedent
\(\forall n^\N\exists m^\N A(n,m)\), in the sense that
\(A(n,\alpha(n))\) holds for every \(n\), the value component of the continuity
realizer agrees with the exact-information realizer:
\[
  \val_{R(\Omega)}(\alpha)=R_{\mathrm{ex}}(\alpha).
\]
The second component \(\modu_{R(\Omega)}:\baire\to\N\) carries the additional continuity information.
If \(\alpha\) realizes the antecedent, then \(\alpha\), viewed as an exact
antecedent realizer, is compatible with the generic sequence \(\Omega\) at
\(\alpha\): whenever an exact input \(n\) is compatible with input information
\(a:\N^\nucleus\), the exact witness \(\alpha(n)\) is compatible with the output
information \(\Omega(a)\).  Hence the translated antecedent is satisfied by
\(\Omega\).  Applying the inhabitedness-strengthened invariant of soundness to the extracted
positive realizer gives that the output information object \(R(\Omega)\) is
inhabited relative to \(\alpha\), that is, for some \(k:\N\),
\[
  k\compat^\alpha_\N R(\Omega).
\]
Unfolding the definition of \(\compat^\alpha_\N\) gives
\[
  k = \val_{R(\Omega)}(\alpha)
  \quad\text{and}\quad
  \forall \beta^{\baire}\,
  \bigl(
    \alpha =_{\modu_{R(\Omega)}(\alpha)} \beta
    \to
    \val_{R(\Omega)}(\beta)=\val_{R(\Omega)}(\alpha)
  \bigr).
\]
Therefore \(\modu_{R(\Omega)}(\alpha)\) is a pointwise modulus of continuity at
\(\alpha\) for the exact-information realizer-transformer, represented by the
functional \(\val_{R(\Omega)}\).

This example also explains why extracted realizers can carry additional
information beyond the ordinary exact-information realizers.  The extracted term
is built compositionally from the proof, using the realizer operations associated
with the logical and arithmetic rules of the interpretation.  Thus the syntactic
construction of the realizer can be used not only to compute ordinary witnesses,
but also to propagate additional data.  Our way of doing this is to choose an
information nucleus whose information types and operations already contain the
extra data, and whose compatibility relation states the corresponding
correctness property.  For the continuity nucleus, the information objects carry
both value and modulus components, and the operations of the interpretation
propagate these components together.  Consequently, the modulus component of an
extracted realizer is produced by the same proof extraction that produces the
ordinary realizer, rather than by a separate post-processing argument.

\end{document}